\newtheorem{Thm}{Theorem}[section] 		
\newtheorem{Lem}[Thm]{Lemma}	
\newtheorem{Rmk}[Thm]{Remark}		
\newtheorem{Exm}[Thm]{Example}
\newtheorem{problem}[Thm]{Open problem}
\newtheorem{Mainthm}[Thm]{Main theorem}
\newtheorem{Con}{Construction}
\newtheorem{Cond}{Condition}
\newtheorem{Fra}{Framework}
\newcommand{\NN}{\mathbb{N}}
\newcommand{\F}{\mathbb{F}}%
\begin{document}
\title{Algebraic Geometry Codes for Distributed Matrix Multiplication Using Local Expansions}


\author[*]{Jiang Li}
\author[*]{Songsong Li}
\author[*]{Chaoping Xing}
\affil[*]{School of Electronic Information and Electrical Engineering, Shanghai Jiao Tong University, Shanghai,  China,\ \authorcr{lijiang22222@sjtu.edu.cn, songsli@sjtu.edu.cn, xingcp@sjtu.edu.cn }}

\date{}

\maketitle

\begin{abstract}
Code-based Distributed Matrix Multiplication (DMM) has been extensively studied in distributed computing for efficiently performing large-scale matrix multiplication using coding theoretic techniques. The communication cost and recovery threshold (i.e., the least number of successful worker nodes required to recover the product of two matrices) are two major challenges in coded DMM research. Several constructions based on Reed-Solomon (RS) codes are known, including Polynomial codes, MatDot codes, and PolyDot codes. The PolyDot code is a unified framework of Polynomial and MatDot codes, offering a trade-off between the recovery threshold and communication costs. However, these RS-based schemes are not efficient for small finite fields because the distributed order (i.e., the total number of worker nodes) is limited by the size of the underlying finite field. Algebraic geometry (AG) codes, as a generalization of RS codes, can have a code length exceeding the size of the finite field, which helps solve this problem. Some work has been done to generalize Polynomial and MatDot codes to algebraic geometry codes, but the generalization of PolyDot codes to algebraic geometry codes still remains an open problem as far as we know. This is because functions of an algebraic curve do not behave as nicely as polynomials.

In this work, by using local expansions of functions, we are able to generalize the  three DMM schemes based on Reed-Solomon codes to algebraic geometry codes. Specifically, we provide a construction of AG-based PolyDot codes for the first time. In addition, our AG-based Polynomial and MatDot codes achieve better recovery thresholds compared to previous AG-based DMM schemes while maintaining similar communication costs. Our constructions are based on a novel basis of the Riemann-Roch space using local expansions, which naturally generalizes the standard monomial basis of the univariate polynomial space in RS codes. In contrast, previous work used the non-gap numbers to construct a basis of the Riemann-Roch space, which can cause cancellation problems that prevent the conditions of PolyDot codes from being satisfied.


 \end{abstract}

\section{Introduction}

Distributed computing is a method that utilizes multiple worker nodes working collaboratively to solve large-scale tasks. It offers several advantages over centralized computing, including high reliability, scalability, computational speed, and cost-effectiveness. This work focuses primarily on distributed matrix multiplication (DMM), a specialized form of distributed computing aimed at computing the product of two large-scale matrices. Matrix multiplication is a fundamental operation in various fields such as machine learning, scientific computing, and graph processing.

During DMM, worker nodes need to exchange data with other nodes, which introduces two significant challenges: communication overhead during interactions and the impact of straggler nodes that run slower or are prone to delays. Since distributed computing requires waiting for all worker nodes to complete their tasks, these straggler nodes can significantly affect the overall runtime of the computation. Recent works \cite{dutta2017coded,lee2017speeding,lee2017high,yu2017polynomial,dutta2019optimal,yu2020straggler,dutta2018unified,fidalgo2024distributed} have made efforts to reduce the communication load and alleviate the effects of straggler nodes in distributed matrix multiplication. Other studies \cite{CT18, MLG22, MSJ20, DEK20, OC24} explore secure distributed matrix multiplication, considering data privacy protection during distributed computing.

Code-based distributed matrix multiplication has been proposed as a promising solution to mitigate straggler effects. Coding theory facilitates theoretical analyses of the fundamental limits of these constructions and allows for explicit comparisons with existing strategies. In the framework of coded DMM, there are \(N\) worker nodes and a master node. The master node first encodes two multiplicand matrices, \(A\) and \(B\), into two codewords \(c_A\) and \(c_B\) of length \(N\), whose coordinates are small matrices. The \(i\)-th coordinates of \(c_A\) and \(c_B\) are then sent to the \(i\)-th worker nodes for distributed computing. Finally, the master node downloads the results from the worker nodes to recover the product \(AB\). In this context, the straggler nodes can be viewed as erasure errors, which can be corrected using erasure decoding techniques.

The best-known coded DMM schemes are constructed from Reed-Solomon codes. However, the number of worker nodes is constrained by the size of the underlying finite field \(\mathbb{F}_q\), i.e., \(N \leq q\). Consequently, the advantages of code-based DMM for large-scale matrices are limited, especially in small finite fields where RS-based schemes are impractical. Algebraic geometry (AG) codes, as a generalization of RS codes, can have a code length exceeding the size of the finite field, thus solving this problem. There has been some work on generalizing Polynomial codes and MatDot codes to algebraic geometry codes, but the generalization of PolyDot codes remains an open problem as far as we know.


\subsection{Related work}
Assume \(A \in \mathbb{F}_q^{t \times r}\) and \(B \in \mathbb{F}_q^{r \times s}\) are two large-scale matrices defined over a finite field \(\mathbb{F}_q\). In the series of works on code-based DMM, the matrices \(A\) and \(B\) are partitioned into \(mp\) and \(np\) submatrices as follows:
\begin{equation}\label{eq:1}
A =
\begin{pmatrix}
	A_{11} & \cdots & A_{1p} \\
	\vdots & \ddots & \vdots \\
	A_{m1} & \cdots & A_{mp}
\end{pmatrix}
, \quad
B =
\begin{pmatrix}
	B_{11} & \cdots & B_{1n} \\
	\vdots & \ddots & \vdots \\
	B_{p1} & \cdots & B_{pn}
\end{pmatrix}
,
\end{equation}
where \(m\), \(n\), and \(p\) are divisors of \(t\), \(s\), and \(r\), respectively. Then, the product \(AB\) is given by
\[
AB = \begin{pmatrix}
	C_{11} & \cdots & C_{1n} \\
	\vdots & \ddots & \vdots \\
	C_{m1} & \cdots & C_{mn}
\end{pmatrix},
\]
where \(C_{iw} = \sum_{j=1}^p A_{ij} B_{jw}\). Specifically, in Polynomial codes, \(p = 1\), while in MatDot codes, \(m = n = 1\). Thus, the product of \(A\) and \(B\) corresponds to an outer (resp. inner) product of matrix vectors in Polynomial (resp. MatDot) codes.

In \cite{yu2017polynomial}, Yu et al. introduced a novel code-based construction for DMM using Polynomial codes. Subsequently, two more constructions, MatDot and PolyDot codes, were introduced in \cite{dutta2019optimal} to improve the recovery threshold, i.e., the minimum number of worker nodes required to recover \(AB\). MatDot codes achieve a better recovery threshold compared to Polynomial codes but incur higher communication overhead. PolyDot codes provide a unified construction that combines Polynomial and MatDot codes, offering a trade-off between recovery threshold and communication costs. Meanwhile, Entangled Polynomial codes \cite{yu2020straggler} and Generalized PolyDot codes \cite{dutta2018unified} have been introduced to achieve better recovery thresholds compared to PolyDot codes. All of these polynomial-based codes are subcodes of Reed-Solomon codes (RS codes). Therefore, we collectively refer to them as RS-based distributed matrix multiplication (RS-based DMM).

For comparison with our results, we summarize the corresponding recovery threshold, communication cost, and computation complexity of worker nodes in known RS-based DMM schemes in the following table.
The cost is measured in terms of the number of elements or operations in the underlying finite field. The communication cost from the master node to the \(N\) worker nodes is referred to as the upload cost. The communication cost from the \(R\) successful worker nodes back to the master node is termed the download cost. The complexity for each worker node to compute the matrix multiplications is referred to as the worker computation complexity.


\begin{table}[H]
	\centering
 \small
	\begin{tabular}{|c|c|c|c|c|}
		\hline
 \textbf{RS-based} & \textbf{Recovery} & \multicolumn{2}{|c|}{\textbf{Communication}} & \textbf{Worker} \\

	\textbf{DMM}	 & \textbf{Threshold \(R\)} & {Upload} & {Download}  & \textbf{Computation} \\
		\hline
		Polynomial \cite{yu2017polynomial} & {\(mn\)} & {\(O\left((\frac{t}{m} + \frac{s}{n})rN\right)\)} & {\(O(\frac{ts}{mn}R)\)} & {\(O(\frac{trs}{mn})\)} \\
  \hline
		Matdot\cite{dutta2019optimal} & \(2p - 1\) & \(O(\frac{(t + s)r}{p}N)\) & \(O(tsR)\) & \(O(\frac{trs}{p})\)  \\
		\hline
		Polydot\cite{dutta2019optimal} & \((2p - 1)mn\) & \(O\left((\frac{tr}{mp} + \frac{rs}{np})N\right)\) & \(O(\frac{ts}{mn}R)\) & \(O(\frac{trs}{mpn})\)  \\
		\hline
	\end{tabular}
	\caption{Performance metrics of Polynomial, MatDot, and PolyDot codes}

	\label{tab:1}
\end{table}

In \cite{machado2023hera}, the authors extended Polynomial codes from Reed-Solomon codes to Hermitian codes. Meanwhile, in \cite{makkonen2023algebraic}, the authors generalized the approach from \cite{DEK20} to Hyper-elliptic codes. However, these constructions focus on secure distributed matrix multiplication, which is outside the scope of this work. Recently, in \cite{fidalgo2024distributed}, the authors extended both Polynomial and MatDot codes from Reed-Solomon codes to one-point algebraic geometry codes, referred to as AG-based Polynomial codes and AG-based MatDot codes, respectively.
They also investigated the lower bounds of the optimal recovery thresholds. In their constructions, the AG-based Polynomial code achieves a recovery threshold close to the optimal bound \cite[Proposition 1]{fidalgo2024distributed}, namely \(mn + g\), where \(g\) denotes the genus of the underlying algebraic function field. The AG-based MatDot DMM is an optimal construction under specific conditions involving complex parameters related to the algebraic function field. For simplicity, we summarize their results for some special AG codes in the following table, where \(c(P):=\min\{s \in W(P) \mid [s,\infty) \subset W(P)\}\) and \(W(P)\) is the Weierstrass semi-group of the rational place \(P\). For any function field with genus \(g > 0\), it always holds that \(g < c(P) \leq 2g\).

\begin{table}[H]
	\centering
 \small
	\begin{tabular}{|c|c|c|c|c|}
		\hline
 \textbf{AG-based} & {\textbf{Recovery}} & \multicolumn{2}{|c|}{\textbf{Communication}} & \textbf{Worker} \\

	\textbf{DMM}& \textbf{Threshold R}  & {Upload} & {Download}  & \textbf{Computation} \\
		\hline
		Polynomial \cite{fidalgo2024distributed} & {\(\geq mn+c(P)\)}  & {\(O\left((\frac{t}{m} + \frac{s}{n})rN\right)\)} & {\(O(\frac{ts}{mn}R)\)} & {\(O(\frac{trs}{mn})\)} \\
  \hline
		Sparse MatDot\cite{fidalgo2024distributed}\tablefootnote{See \cite[Definition 13]{fidalgo2024distributed} for the definition of the sparse Weierstrass semi-group.
} & \(2p - 1+2c(P)\)   & \(O(\frac{(t + s)r}{p}N)\) & \(O(tsR)\) & \(O(\frac{trs}{p})\)  \\
  		\hline
 Elliptic MatDot\cite{fidalgo2024distributed} & $2p-1+2g+2$ & \(O(\frac{(t + s)r}{p}N)\) & \(O(tsR)\) & \(O(\frac{trs}{p})\)  \\ 	\hline
  Hermitian MatDot\cite{fidalgo2024distributed} & $2p-1+3g$ & \(O(\frac{(t + s)r}{p}N)\) & \(O(tsR)\) & \(O(\frac{trs}{p})\)  \\
		\hline
	\end{tabular}
	\caption{AG-based Polynomial and MatDot DMM in \cite{fidalgo2024distributed}}
	\label{tab:2}
\end{table}

Based on \cite{fidalgo2024distributed}, we propose the following two open problems:

\begin{problem}\label{prob:1}
    Can we construct AG-based Polynomial DMM schemes that achieve the optimal recovery threshold bound \(g + mn\) given in \cite[Proposition 1]{fidalgo2024distributed}?
\end{problem}

\begin{problem}\label{prob:2}
    Can we generalize the construction of PolyDot DMM from Reed-Solomon codes to algebraic geometry codes?
\end{problem}

\subsection{Our contributions}
\subsubsection{Main techniques}
For constructions of distributed matrix multiplication based on algebraic geometry codes, we first need a basis of the corresponding Riemann-Roch space for encoding and decoding. In particular, in the RS-based DMM, the Riemann-Roch space is simply the univariate polynomial space $\mathbb{F}_q[x]$, and the corresponding basis consists of the standard monomials $\{1, x, x^2, \dots\}$. In AG-based DMM, we first construct a novel basis of the Riemann-Roch space from local expansions, which naturally generalizes the standard monomial basis. As a result, we can generalize all RS-based DMM schemes to AG codes.

In previous work, the basis was constructed from the Weierstrass semigroup \(W(P)\) of some rational place \(P\) in the algebraic function field. This approach can cause cancellation problems, rendering the PolyDot codes ineffective. To avoid cancellation problems in our construction, we use some special divisors rather than one-point divisors to enlarge the Riemann-Roch space. This approach ensures that, on the one hand, the local expansions of the basis have sufficiently large gaps to prevent cancellations in the product of any two basis elements; on the other hand, these special divisors are well-suited for achieving better recovery thresholds in our AG-based DMM.


\subsubsection{Main results}
We generalize Polynomial codes \cite{yu2017polynomial}, MatDot codes \cite{dutta2019optimal}, and PolyDot codes \cite{dutta2019optimal} (as well as Entangled Polynomial codes \cite{yu2020straggler} and Generalized PolyDot codes \cite{dutta2018unified}) from Reed-Solomon codes to algebraic geometry codes based on general algebraic function fields using local expansions. Our constructions provide affirmative answers to Open Problems \ref{prob:1} and \ref{prob:2}. Given two matrices \(A \in \mathbb{F}_q^{t \times r}\) and \(B \in \mathbb{F}_q^{r \times s}\), which are partitioned into \(mp\) and \(np\) submatrices as shown in Equation~\eqref{eq:1}, our main results can be summarized as follows:

\begin{Mainthm}\label{thm:1.1}(AG-based Polynomial DMM)
	Assume \(F/\mathbb{F}_q\) is an arbitrary algebraic function field with genus \(g\). Let \(P\) be a rational place in \(F\) and \(W(P)\) the Weierstrass semigroup of \(P\). Then, Construction \hyperref[con:1]{1} in Section \hyperref[sec:3]{3} provides an AG-based Polynomial DMM scheme over \(\mathbb{F}_q\) with a recovery threshold of \(R = 2g + mn\). Moreover, if \(m \in W(P)\) or \(n \in W(P)\), then Construction \hyperref[con:2]{2} provides an AG-based Polynomial DMM scheme over \(\mathbb{F}_q\) with a recovery threshold of \(R = g + mn\).

\end{Mainthm}

As shown in Table \hyperref[tab:2]{2}, the recovery threshold \(R\) of AG-based Polynomial DMM in \cite{fidalgo2024distributed} satisfies \(R \geq mn + c(P)\), with equality holding when \(m \in W(P)\). Since \(g < c(P) \leq 2g\) for \(g > 0\), our constructions for AG-based Polynomial DMM achieve better recovery thresholds than those in \cite{fidalgo2024distributed}. Specifically, if the partition parameters \(m \in W(P)\) or \(n \in W(P)\), our construction attains the optimal bound given in \cite{fidalgo2024distributed}.

\begin{Mainthm}\label{thm:1.2}(AG-based MatDot DMM)
	Assume \(F/\mathbb{F}_q\) is an arbitrary algebraic function field with genus \(g\). Then, Construction \hyperref[con:3]{3} in Section \hyperref[sec:4]{4} is an AG-based MatDot DMM scheme defined over \(\mathbb{F}_q\) with a recovery threshold of \(R = 2g + 2p - 1\).

\end{Mainthm}

The optimal recovery threshold for AG-based MatDot codes in \cite{fidalgo2024distributed} is hard to determine as it depends on the structure of \(W(P)\). For some special AG codes presented in Table \hyperref[tab:2]{2}, our AG-based MatDot DMM achieves better thresholds than those in \cite{fidalgo2024distributed}. 

We also provide a construction for AG-based PolyDot DMM:

\begin{Mainthm}\label{thm:1.3}(AG-based PolyDot DMM)
	Assume \(F/\mathbb{F}_q\) is an arbitrary algebraic function field with genus \(g\). Then, Construction \hyperref[con:4]{4} in Section \hyperref[sec:5]{5} provides an AG-based PolyDot DMM scheme over \(\mathbb{F}_q\) with a recovery threshold of
\[
R = \left\{
\begin{array}{ll}
    4g + (2p-1)mn + 2mn - 2m & \text{if } m = 1 \text{ or } m \ge n \ge 2, \\
    4g + (2p-1)mn + 2mn - 2n & \text{if } n = 1 \text{ or } n > m \ge 2.
\end{array}
\right.
\]

\end{Mainthm}

The recovery threshold reflects the number of successful worker nodes during distributed computing. We also analyze the communication cost and worker computation complexity for the above constructions. In the following table, let \(R\) be the corresponding recovery threshold in Main Theorems~\ref{thm:1.1}, \ref{thm:1.2}, and \ref{thm:1.3}, respectively.

\begin{table}[H]
	\centering
	\begin{tabular}{|c|c|c|c|}
		\hline
 \textbf{AG-based} &  \multicolumn{2}{|c|}{\textbf{Communication}} &\textbf{Worker} \\

	\textbf{DMM}	 & {Upload} & {Download} & \textbf{Computation} \\
		\hline
		Polynomial[Thm.\ref{thm:1.1}]  & \(O\left((\frac{t}{m} + \frac{s}{n})rN\right)\) & \(O(\frac{ts}{mn}R)\) & \(O(\frac{trs}{mn})\)  \\
		\hline
		MatDot[Thm.\ref{thm:1.2}] & \(O(\frac{(t + s)r}{p}N)\) & \(O(tsR)\) & \(O(\frac{trs}{p})\)   \\
		\hline
		PolyDot[Thm.\ref{thm:1.3}] & \(O\left((\frac{tr}{mp} + \frac{rs}{np})N\right)\)  & \(O(\frac{ts}{mn}R)\) & \(O(\frac{trs}{mpn})\)  \\
		\hline
	\end{tabular}
	\caption{Complexities of AG-based Polynomial, Matdot, and PolyDot Codes}
	\label{tab:3}
\end{table}

Note that in RS-based DMM, the number of worker nodes \(N\) is limited by the size of the underlying finite field. For DMM involving large-scale matrices over \(\mathbb{F}_q\) with a distributed order \(N > q\), RS-based schemes require implementation in an extended finite field of size at least \(N\). In contrast, AG-based DMM allows us to choose an algebraic function field \(F/\mathbb{F}_q\) with enough rational points to accommodate \(N\). Thus, all AG-based DMM constructions can be implemented directly in \(\mathbb{F}_q\). Under the same distributed order \(N\) and \(N > q\), the bit complexities of upload and computation for each worker in our AG-based constructions are less than those in RS-based constructions by a factor of \(\frac{\log q}{\log N}\).
Moreover, if the genus \(g\) of an algebraic function field \(F\) satisfies the following conditions:
\begin{itemize}
    \item \(g < mn \frac{\log N / q}{\log q}\) in Polynomial DMM;
    \item \(g < (2p-1) \frac{\log N / q}{2 \log q}\) in Matdot DMM;
    \item \(g < (2p-1)mn\left(\frac{\log N / q}{4 \log q} - \frac{1}{4p-2}\right)\) in PolyDot DMM,
\end{itemize}
then the bit complexities of download in our AG-based DMMs are also better than those in RS-based DMMs.

\subsection{Organization}
This paper is organized as follows. In Section 2, we present some preliminaries on algebraic function fields over finite fields, algebraic geometry codes, local expansions, and the general framework of code-based DMM. In Sections 3, 4, and 5, we first review the classical Polynomial, MatDot, and PolyDot codes, and then introduce our constructions based on AG codes. In Section 6, we explain the decoding procedures for these AG-based DMM schemes. In Section 7, we compare the recovery thresholds of our AG-based Polynomial and MatDot codes with those in \cite{fidalgo2024distributed}.

\section{Preliminaries}
In this section, we present some preliminaries on algebraic function fields over finite fields, algebraic geometry codes, local expansions and the the general framework of code-based DMM.

\subsection{Algebraic function fields over finite fields}  \label{sec:2.1}
Let us introduce some basic notations and facts about algebraic function fields. For more details, we refer the reader to \cite{stichtenoth2009algebraic}.

Let $q$ be a prime power and $\mathbb{F}_q$ be the finite field with $q$ elements. Let $F/\mathbb{F}_q$ be an algebraic function field with the full constant field $\mathbb{F}_q$. Let $\mathbb{P}_F$ denote the set of places of $F$ and let $g(F)$ denote the genus of $F$. The degree $\deg(P)$ of a place $P$ is the degree of the residue class field of $P$ over $\mathbb{F}_q$. Recall that a place $P$ of degree 1 is also called a rational place.

For a given place $P$, let $\nu_P$ denote the corresponding normalized discrete valuation. For a function $f \in F$ and a place $P \in \mathbb{P}_F$, we say $P$ is a zero (resp., pole) of $f$ if $\nu_P(f) > 0$ (resp., $\nu_P(f) < 0$). A divisor $G$ of $F$ is a formal sum $G = \sum_{P \in \mathbb{P}_F} n_P P$ with only finitely many nonzero coefficients $n_P \in \mathbb{Z}$. The support of $G$ is defined as $\text{supp}(G) = \{ P \in \mathbb{P}_F : n_P \neq 0 \}$ and the degree of $G$ is defined as $\deg(G) = \sum_{P \in \mathbb{P}_F} n_P \deg(P)$. We denote the divisor group of $F$ by $\text{Div}(F)$.
For any nonzero element $f \in F$, the zero divisor of $f$ is defined by $(f)_0 = \sum_{P \in \mathbb{P}_F, \nu_P(f) > 0} \nu_P(f) P$, and the pole divisor of $f$ is defined by $(f)_\infty = \sum_{P \in \mathbb{P}_F, \nu_P(f) < 0} -\nu_P(f) P$. The principal divisor of $f$ is then
\[
(f) := (f)_0 - (f)_\infty = \sum_{P \in \mathbb{P}_F} \nu_P(f) P.
\]

For any divisor $G \in \text{Div}(F)$, the Riemann-Roch space associated with $G$ is defined by
\[
\mathcal{L}(G) = \{f \in F \setminus \{0\} : (f) \geq -G\} \cup \{0\}.
\]
It is a finite-dimensional vector space over $\mathbb{F}_q$, whose dimension is denoted by $\ell(G)$. From Riemann's theorem \cite[Theorem 1.4.17]{stichtenoth2009algebraic}, $\ell(G) \geq \deg(G) - g + 1$. Recall that if $\deg(G) < 0$, then $\ell(G) = 0$, and if $\deg(G) \ge 2g - 1$, then $\ell(G) = \deg(G) - g + 1$.

For a given place $P$, denote $W(P)$ as the Weierstrass semigroup of $P$, which is a sub-semigroup of the additive semigroup $\mathbb{N}$:
\[
W(P) = \left\lbrace k \in \mathbb{N} : \mathcal{L}(kP) \not= \mathcal{L}((k-1)P) \right\rbrace.
\]
An integer $k \in \mathbb{N}$ is called a pole number of $P$ if $k \in W(P)$; otherwise, $k$ is called a gap number of $P$. By the Weierstrass gap theorem \cite[Theorem 1.6.8]{stichtenoth2009algebraic}, if $g > 0$, there are exactly $g$ gap numbers $i_1 < \ldots < i_g$ of $P$ with $i_1 = 1$ and $i_g \le 2g - 1$.

\subsection{Algebraic geometry codes}  \label{sec:2.2}
Let $F/\mathbb{F}_q$ be an algebraic function field with the full constant field $\mathbb{F}_q$. Let $\mathcal{P} = \{P_1, \ldots, P_n\}$ be a set of $n$ distinct rational places of $F$. For a divisor $G \in \text{Div}(F)$ with $\text{supp}(G) \cap \mathcal{P} = \emptyset$, the evaluation map is defined as
\[\text{ev}_{\mathcal{P}}: \mathcal{L}(G) \to \mathbb{F}_q^n,\ f \mapsto (f(P_1), \ldots, f(P_n)),\]
which is well-defined since the $P_i$'s are not in the support of $G$. Note that $\text{ev}_{\mathcal{P}}$ is an $\mathbb{F}_q$-linear map with kernel $\ker(\text{ev}_{\mathcal{P}}) = \mathcal{L}(G - \sum_{i=1}^n P_i)$. If $\deg(G) < n$, then $\text{ev}_{\mathcal{P}}$ is injective. Each algebraic function $f \in \mathcal{L}(G)$ is uniquely determined by its evaluations at any $\deg(G)+1$ rational places among $P_1, P_2, \ldots, P_n$.

The algebraic geometry code associated with $\mathcal{P}$ and $G$ is defined by
\[
C_\mathcal{L}(\mathcal{P}, G) := \text{ev}_{\mathcal{P}}(\mathcal{L}(G)) = \{(f(P_1), f(P_2), \ldots, f(P_n)) : f \in \mathcal{L}(G)\}.
\]

Suppose that $0 < \deg(G) < n$. Then $C_\mathcal{L}(\mathcal{P}, G)$ is an $[n, k, d]$-linear code with dimension $k = \ell(G)$ and minimum distance $d \ge n - \deg(G)$ by \cite[Theorem 2.2.2]{stichtenoth2009algebraic}.

\subsection{Local expansions}\label{sec:2.3}
In this subsection, we discuss local expansions and present the key lemma necessary for the constructions in Sections \hyperref[sec:3]{3} and \hyperref[sec:4]{4}.

Let $F/\mathbb{F}_q$ be an algebraic function field with the full constant field $\mathbb{F}_q$, and let $P$ be a rational place. An element $\tau$ of $F$ is called a local parameter at $P$ if $\nu_P(\tau) = 1$ (such a local parameter always exists). For a nonzero function $f \in F$ with $\nu_P(f) \ge v$, the local expansion of $f$ at $P$ is 
\[
f = \sum_{r=v}^\infty f_r \tau^r, \tag{2} \label{eq:2}
\]
where all $f_r\in\F_q$. The local expansion can be computed as the following procedure:
\begin{itemize}
    \item Let $f_v = \left(\frac{f}{\tau^v}\right)(P)$, i.e., $f_v$ is the evaluation of the function $f/\tau^v$ at $P$. Then $\nu_P \left(f - f_v \tau^v \right) \ge v + 1$.
    \item For $m\in[v,\infty)$, let $f_{m+1} = \left(\frac{f - \sum_{r=v}^m f_r \tau^r}{\tau^{m+1}}\right)(P)$. By induction, $\nu_P \left(f - \sum_{r=v}^{m+1} f_r \tau^r \right) \ge m + 2$. 
\end{itemize}
Thus we can obtain an infinite sequence $\{f_r\}_{r=v}^\infty$ of elements of $\mathbb{F}_q$ such that $$\nu_P \left(f - \sum_{r=v}^m f_r \tau^r \right) \ge m + 1.$$
The above procedure shows that finding a local expansion at a rational place is very efficient as long as the computation of evaluations of functions at this place is easy.

In the following, let $D \in \text{Div}(F)$ be a positive non-special divisor with $\deg(D) = g$ and $\ell(D) = 1$. The existence of $D$ is guaranteed by \cite[Proposition 2]{niederreiter1999new}. Given a positive divisor $A \in \text{Div}(F)$, we consider the Riemann-Roch space $\mathcal{L}(D + A)$. The following lemma will be useful to determine its dimension.

\begin{Lem} \label{lem:2.1}
Let $D \in \text{Div}(F)$ be a positive non-special divisor with $\deg(D) = g$ and $\ell(D) = 1$. Then for any positive divisor $A \in \text{Div}(F)$, the Riemann-Roch space $\mathcal{L}(D + A)$ has dimension $\ell(D + A) = \deg(A) + 1$.

\end{Lem}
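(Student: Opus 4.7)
The plan is to obtain the formula directly from the Riemann–Roch theorem applied to $D+A$, with the main work consisting of verifying that the index of specialty of $D+A$ vanishes. Concretely, I would fix a canonical divisor $K$ of $F$ and write the Riemann–Roch equality
\[
\ell(D+A) - \ell(K - D - A) = \deg(D+A) - g + 1 = \deg(A) + 1,
\]
so that the lemma reduces to showing $\ell(K - D - A) = 0$.

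To do this, I would first unpack the hypothesis that $D$ is non-special. Applying Riemann–Roch to $D$ itself gives
\[
\ell(D) - \ell(K - D) = \deg(D) - g + 1 = 1,
\]
and since $\ell(D) = 1$ by assumption, we conclude $\ell(K - D) = 0$, i.e.\ $\mathcal{L}(K - D) = \{0\}$. This is the standard characterization of non-specialness that makes the result tick.

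Next I would use the positivity of $A$. Since $A \ge 0$ as a divisor, we have $K - D - A \le K - D$, and for divisors $G_1 \le G_2$ the Riemann–Roch spaces satisfy $\mathcal{L}(G_1) \subseteq \mathcal{L}(G_2)$. Therefore
\[
\mathcal{L}(K - D - A) \subseteq \mathcal{L}(K - D) = \{0\},
\]
which gives $\ell(K - D - A) = 0$. Substituting this back into the Riemann–Roch equation for $D+A$ yields $\ell(D+A) = \deg(A) + 1$, as desired.

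There is no real obstacle here: the argument is a one-line application of Riemann–Roch, once one observes that adding any positive divisor to a non-special divisor preserves the vanishing of the index of specialty. The only thing to be careful about is to spell out explicitly why $\ell(D)=1$ together with $\deg(D)=g$ forces $\ell(K-D)=0$, since this is precisely what is meant by calling $D$ non-special, and it is the single fact that drives the whole computation.
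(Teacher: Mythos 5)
Your proof is correct, but it takes a genuinely different route from the paper's. You apply the full Riemann--Roch theorem $\ell(D+A) - \ell(K-D-A) = \deg(D+A) - g + 1$ and then kill the specialty index: Riemann--Roch on $D$ itself gives $\ell(K-D) = 0$ (since $\ell(D)=1$ and $\deg(D)=g$), and positivity of $A$ gives $K-D-A \le K-D$, hence $\mathcal{L}(K-D-A) \subseteq \mathcal{L}(K-D) = \{0\}$. The paper instead sandwiches the dimension between two inequalities: the lower bound $\ell(D+A) \ge \deg(D+A) + 1 - g = \deg(A)+1$ from Riemann's inequality, and the upper bound $\ell(D+A) \le \ell(D) + \deg(D+A) - \deg(D) = \deg(A)+1$ from the elementary dimension-growth lemma for $D \le D+A$ (Stichtenoth Lemma~1.4.8). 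Your approach is more conceptual — it identifies exactly why the dimension formula is exact, namely that non-specialness is preserved under adding an effective divisor — whereas the paper's argument is slightly more elementary in that it never invokes the canonical divisor or the full Riemann--Roch equality, only Riemann's inequality and a monotonicity bound. Both are clean and correct; the paper's version has the small pedagogical advantage of making transparent that the claim is equivalent to the two bounds matching, while yours makes transparent that $D+A$ itself is non-special, which is a reusable fact.
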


\begin{proof}
First, from Riemann's theorem \cite[Theorem 1.4.17]{stichtenoth2009algebraic}, we have $\ell(D + A) \ge \deg(D + A) + 1 - g = \deg(A) + 1$. On the other hand, we have $\ell(D + A) \le \ell(D) + \deg(D + A) - \deg(D) = \deg(A) + 1$ by \cite[Lemma 1.4.8]{stichtenoth2009algebraic}. Thus, $\ell(D + A) = \deg(A) + 1$.
\end{proof}

Particularly, for $A=vP$ and $P$ is a rational place such that $P \not\in \text{supp}(D)$, we can construct a basis of $\mathcal{L}(D + vP)$ which will be used in our AG-based constructions for DMM.

\begin{Lem} \label{lem:2.2}
Let $D \in \text{Div}(F)$ be a positive non-special divisor with $\deg(D) = g$ and $\ell(D) = 1$. Assume $P \in \mathbb{P}_F$ is a rational place such that $P \not\in \text{supp}(D)$. Let $\tau \in F$ be a local parameter at $P$, i.e., $\nu_P(\tau) = 1$. Then, for any integer $v \in \mathbb{N}$, there exists a basis $\{\hat{f}_0, \hat{f}_1, \ldots, \hat{f}_v\}$ of the Riemann-Roch space $\mathcal{L}(D + vP)$ satisfying $\hat{f}_0 = 1 \in \mathbb{F}_q$ and
\[
\hat{f}_i = \tau^{-i} + \sum_{\ell=1}^{\infty} \lambda_\ell^{(i)} \tau^\ell \quad \text{for} \quad i = 1, \ldots, v.
\]
\end{Lem}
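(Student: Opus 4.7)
The plan is to induct on $v$, using Lemma \ref{lem:2.1} to control the dimension of $\mathcal{L}(D + vP)$ at each stage, and to build $\hat{f}_v$ from any element of $\mathcal{L}(D + vP) \setminus \mathcal{L}(D + (v-1)P)$ by a Gram--Schmidt style elimination carried out directly on the local expansion at $P$. The key structural fact I would exploit repeatedly is that, because $P \notin \mathrm{supp}(D)$, any nonzero $f \in \mathcal{L}(D + jP)$ satisfies $\nu_P(f) \geq -j$, so membership in the filtration
\[
\mathcal{L}(D) \subsetneq \mathcal{L}(D + P) \subsetneq \cdots \subsetneq \mathcal{L}(D + vP)
\]
is detected entirely by the leading exponent of the local expansion at $P$; the strict inclusions here come from Lemma \ref{lem:2.1}, which yields $\ell(D + jP) = j + 1$ for every $j \geq 0$.

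For the base case $v = 0$, one has $\ell(D) = 1$ and $1 \in \mathcal{L}(D)$, so $\{\hat{f}_0 = 1\}$ is the required basis. For the inductive step, assume $\{\hat{f}_0, \ldots, \hat{f}_{v-1}\}$ has already been produced for $\mathcal{L}(D + (v-1)P)$. Pick any $f \in \mathcal{L}(D + vP) \setminus \mathcal{L}(D + (v-1)P)$; the filtration observation above forces $\nu_P(f) = -v$, and its local expansion at $P$ reads
\[
f = c_{-v}\tau^{-v} + c_{-v+1}\tau^{-v+1} + \cdots + c_{-1}\tau^{-1} + c_0 + \sum_{\ell \geq 1} c_\ell \tau^\ell,
\]
with $c_{-v} \neq 0$. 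I would then normalize by $c_{-v}^{-1}$ and successively subtract appropriate scalar multiples of $\hat{f}_{v-1}, \hat{f}_{v-2}, \ldots, \hat{f}_1, \hat{f}_0 = 1$ to eliminate, one at a time, the coefficients of $\tau^{-v+1}, \tau^{-v+2}, \ldots, \tau^{-1}, \tau^0$. Because each $\hat{f}_{v-j}$ contributes only the monomial $\tau^{-(v-j)}$ among exponents $\leq 0$ and otherwise only positive powers of $\tau$, the elimination is triangular: clearing a coefficient neither reintroduces a previously cleared one nor disturbs the remaining nonpositive-power terms. The residual function, a linear combination of elements of $\mathcal{L}(D + vP)$, is the desired $\hat{f}_v$.

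Linear independence of $\{\hat{f}_0, \hat{f}_1, \ldots, \hat{f}_v\}$ is automatic since the valuations $\nu_P(\hat{f}_i) = -i$ are pairwise distinct, and by Lemma \ref{lem:2.1} the cardinality $v+1$ matches $\ell(D + vP)$, so this set is a basis. I do not foresee any substantive obstacle; the only real care needed is bookkeeping to verify the triangular nature of the elimination, together with checking that each partial sum stays inside $\mathcal{L}(D + vP)$, which is immediate from $\hat{f}_j \in \mathcal{L}(D + jP) \subseteq \mathcal{L}(D + vP)$.
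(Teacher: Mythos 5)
Your proof is correct, and it takes a genuinely different technical route from the paper, even though both rest on the same two pillars: the dimension formula $\ell(D+jP) = j+1$ from Lemma \ref{lem:2.1} and the fact that $P \notin \mathrm{supp}(D)$ lets the local expansion at $P$ detect the filtration $\mathcal{L}(D) \subsetneq \mathcal{L}(D+P) \subsetneq \cdots$. The paper constructs all of $\hat{f}_1, \ldots, \hat{f}_v$ in one shot: it defines the $\mathbb{F}_q$-linear map $\phi: \mathcal{L}(D+vP) \to \mathbb{F}_q^v$ sending a function to the vector $(\lambda_{-1}, \ldots, \lambda_{-v})$ of its negative local-expansion coefficients, observes that $\ker(\phi) = \mathcal{L}(D)$ has dimension $1$, concludes surjectivity by dimension count, and takes $\hat{f}_i$ to be a preimage of the $i$-th standard basis vector, after which one last subtraction of the constant term $\lambda_0^{(i)} \in \mathbb{F}_q \subseteq \mathcal{L}(D+vP)$ forces the claimed form. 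You instead proceed by induction on $v$, extracting a new top element $f$ with $\nu_P(f) = -v$ at each step and then doing a one-pass elimination against the already-constructed $\hat{f}_{v-1}, \ldots, \hat{f}_0$ to kill the coefficients of $\tau^{-(v-1)}, \ldots, \tau^{0}$. The elimination is in fact even cleaner than ``triangular'': since $\hat{f}_j$ for $j \geq 1$ has \emph{only} $\tau^{-j}$ among its nonpositive-power terms, each subtraction touches exactly one target coefficient, so there is no interaction to track. The paper's global linear-map argument is marginally shorter and avoids induction; your inductive version is more explicitly constructive and makes the ``diagonal'' structure of the elimination transparent, which is useful intuition for the later constructions in Sections \ref{sec:3}--\ref{sec:5} where the same kind of coefficient bookkeeping recurs. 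Both are complete and correct; the choice is one of taste.
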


\begin{proof}
For any function $\hat{f} \in \mathcal{L}(D + vP)$, we have $\nu_P(f) \ge -v$. By Equation~\eqref{eq:2}, assume the local expansion of $\hat{f}$ at $P$ is
\[
\hat{f} = \sum_{\ell=-v}^{\infty} \lambda_\ell \tau^\ell, \quad \lambda_\ell \in \mathbb{F}_q.
\]
Consider the following mapping $\phi: \mathcal{L}(D + vP) \to \mathbb{F}_q^{v}$ given by 
\[
\hat{f} = \sum_{\ell=-v}^{\infty} \lambda_\ell \tau^\ell \longmapsto (\lambda_{-1}, \lambda_{-2}, \ldots, \lambda_{-v}).
\]
Observe that $\phi$ is an $\mathbb{F}_q$-linear map with kernel $\ker(\phi) = \mathcal{L}(D)$. Since $\ell(D + vP) - \ell(D) = v$ by Lemma \hyperref[lem:2.2]{2.2}, the mapping $\phi$ is surjective.
For each vector $e_i=(0, \ldots, 1, \ldots, 0) \in \mathbb{F}_q^{v}$, where only the $i$-th position has a nonzero value $1$, let $\hat{f}_i$ be the preimage of $e_i$, i.e., the local expansion of $\hat{f}_i$ satisfies
\[
\hat{f}_i = \tau^{-i} + \sum_{\ell=0}^{\infty} \lambda_\ell^{(i)} \tau^\ell \quad \text{for} \quad i = 1, \ldots, v.
\]
Since $\lambda_0^{(i)} \in \mathbb{F}_q \subseteq \mathcal{L}(D + vP)$, we have $\hat{f}_i - \lambda_0^{(i)} = \tau^{-i} + \sum_{\ell=1}^{\infty} \lambda_\ell^{(i)} \tau^\ell \in \mathcal{L}(D + vP)$, so we can set 
\[
\hat{f}_i = \tau^{-i} + \sum_{\ell=1}^{\infty} \lambda_\ell^{(i)} \tau^\ell \quad \text{for} \quad i = 1, \ldots, v.
\]

Let $\hat{f}_0 = 1 \in \mathbb{F}_q$. Since $\nu_P(\hat{f}_i) = -i$ for $i = 0, 1, \ldots, v$, the functions $\hat{f}_0, \hat{f}_1, \ldots, \hat{f}_v$ are linearly independent over $\mathbb{F}_q$. This implies that $\hat{f}_0, \hat{f}_1, \ldots, \hat{f}_v$ form a basis of the Riemann-Roch space $\mathcal{L}(D + vP)$ since $\ell(D + vP) = v + 1$.
\end{proof}

\subsection{The General Framework of code-based DMM}\label{sec:2.4}

In this subsection, we introduce the general framework of code-based distributed matrix multiplication and explain the concept of the recovery threshold. We assume \(A \in \mathbb{F}_q^{t \times r}\) and \(B \in \mathbb{F}_q^{r \times s}\) are two large matrices defined over \(\mathbb{F}_q\).

\begin{Fra}{[RS-based DMM]}
	
\begin{itemize}
	\item \textbf{Master node encoding}
	\begin{enumerate}
		\item \textbf{Splitting of input matrices}: Assume \(m\), \(n\), and \(p\) are divisors of \(t\), \(s\), and \(r\), respectively. The matrices \(A\) and \(B\) are partitioned into \(mp\) and \(np\) submatrices:
		\[
		A =
		\begin{pmatrix}
			A_{11} & \cdots & A_{1p} \\
			\vdots & \ddots & \vdots \\
			A_{m1} & \cdots & A_{mp}
		\end{pmatrix}
		, \quad
		B =
		\begin{pmatrix}
			B_{11} & \cdots & B_{1n} \\
			\vdots & \ddots & \vdots \\
			B_{p1} & \cdots & B_{pn}
		\end{pmatrix}
		,
		\]
		Then,
		\[
		AB =
		\begin{pmatrix}
			C_{11} & \cdots & C_{1n} \\
			\vdots & \ddots & \vdots \\
			C_{m1} & \cdots & C_{mn}
		\end{pmatrix}
		,
		\]
		where \(C_{iw} = \sum_{j=1}^p A_{ij} B_{jw}\).
		
	\item \textbf{Encoding}: Let \(\alpha_1, \alpha_2, \dots, \alpha_N\) be \(N\) distinct points in \(\mathbb{F}_q\). Define \(f(x) = \sum_{i,j} A_{i,j} x^{a_{i,j}}\) and \(g(x) = \sum_{k,w} B_{k,w} x^{b_{k,w}}\) as two polynomials, where the coefficient of the product \(h(x) = f(x)g(x)\) includes all entries \(\{C_{i,w} \mid 1 \leq i \leq m, 1 \leq w \leq n\}\). For \(1 \leq i \leq N\), the master node sends the evaluations  \(f(\alpha_i)\) and \(g(\alpha_i)\) to the \(i\)-th worker node.

	\end{enumerate}

	\item \textbf{Worker nodes local computation}: For \(1 \leq i \leq N\), the \(i\)-th worker node computes the matrix multiplication of \(f(\alpha_i)\) and \(g(\alpha_i)\) to obtain the value of \(h(\alpha_i)\).

	\item \textbf{Master node decoding}: Let \(R \) be the least number of evaluations of $h$ to uniquely determine $h$.  The master node downloads the results from \(R\) successful worker nodes to interpolate \(h(x)\) and retrieve the product \(AB\).
\end{itemize}   
\end{Fra}

The recovery threshold \(R\) is defined as the minimum number of worker nodes required to recover \(AB\). In RS-based DMM, the recovery threshold \(R = \deg(h(x)) + 1\) by Lagrange's Interpolation formula.

We can naturally generalize the framework of RS-based DMM to AG-based DMM by using functions instead of polynomials. Let \( F/\mathbb{F}_q \) be an algebraic function field with the full constant field \( \mathbb{F}_q \), and let \( f_{i,j} \) and \( g_{k,w} \) be functions in \( F \):

\begin{Fra}{[AG-based DMM]}
	
\begin{itemize}
	\item \textbf{Master node encoding}
	\begin{enumerate}
		\item \textbf{Splitting of input matrices}:  Similar to RS-based DMM.
		
	\item \textbf{Encoding}: Let \(P_1, P_2, \dots, P_N\) be \(N\) distinct rational points of $F$. Define two functions \(f = \sum_{i,j} A_{i,j} f_{i,j}\) and \(g = \sum_{k,w} B_{k,w} g_{k,w}\) such that the 
coefficient of the product \(h = fg\) includes all entries \(\{C_{i,w} \mid 1 \leq i \leq m, 1 \leq w \leq n\}\). For \(1 \leq i \leq N\), the master node sends the evaluations of \(f(P_i)\) and \(g(P_i)\) to the \(i\)-th worker node.
 
	\end{enumerate}

	\item \textbf{Worker nodes local computation}: For \(1 \leq i \leq N\), the \(i\)-th worker node computes the matrix multiplication of \(f(P_i)\) and \(g(P_i)\) to obtain the value of \(h(P_i)\).

	\item \textbf{Master node decoding}: Let \(R \) be the least number of evaluations of $h$ to uniquely determine $h$. The master node downloads the results from \(R\) successful worker nodes to decode \(h\) and retrieve the product \(AB\).
\end{itemize}   
\end{Fra}

In AG-based DMM, suppose $h \in \mathcal{L}(G)$, then the recovery threshold $R= \deg(G) + 1$ by the discussion in Section \hyperref[sec:2.2]{2.2}.

\begin{Rmk}
Let \( F \) be the rational function field \( \mathbb{F}_q(x) \) and set \( f_{i,j} = x^{a_{i,j}} \), \( g_{k,w} = x^{b_{k,w}} \).
Then, the AG-based DMM is effectively the RS-based DMM. Denote the degree of \( h \) as \( c \), then \( h \in \mathbb{F}_q[x]_{\leq c} = \mathcal{L}(cP_{\infty}) \), where \( P = P_{\infty} \) is the unique pole place of \( x \). Hence, the recovery threshold is \( c + 1 \).

\end{Rmk}

Moreover, we refer to the communication cost from the master node to the \( N \) worker nodes as the upload cost, the communication cost from the \( R \) successful worker nodes back to the master node as the download cost, and the complexity for each worker node to compute the matrix multiplications as the worker computation complexity.

\section{Extending Polynomial Codes to Algebraic Function Fields} \label{sec:3}
 In this section, we first introduce Polynomial codes \cite{yu2017polynomial}, and then employ Lemma \hyperref[lem:2.2]{2.2} to extend the Polynomial codes to the AG case.

\subsection{Polynomial Codes}  \label{sec:3.1}
In the Polynomial codes presented in \cite{yu2017polynomial}, the matrices are partitioned according to the outer product partitioning. Consider two matrices $A \in \mathbb{F}_q^{t \times r}$ and $B \in \mathbb{F}_q^{r \times s}$. Splitting the matrices $A$ and $B$ into $m$ and $n$ submatrices, respectively:
\[
A =
\begin{pmatrix}
	A_1 \\
	\vdots \\
	A_m
\end{pmatrix}
, \quad B = 
\begin{pmatrix}
	B_1 & \cdots & B_n
\end{pmatrix}, \tag{3} \label{eq:3}
\]
with $A_i \in \mathbb{F}_q^{\frac{t}{m} \times r}$ and $B_i \in \mathbb{F}_q^{r \times \frac{s}{n}}$. Then their product $AB$ can be expressed as
\[
AB =
\begin{pmatrix}
	A_1 B_1 & \cdots & A_1 B_n \\
	\vdots & \ddots & \vdots \\
	A_m B_1 & \cdots & A_m B_n
\end{pmatrix}.
\]

The master node constructs two polynomials (with matrix coefficients):
\[
f(x) := \sum_{i=1}^{m} A_i x^{a_i}, \quad g(x) := \sum_{j=1}^{n} B_j x^{b_j},
\]
with
\[
h(x) = f(x)g(x) = \sum_{i=1}^{m} \sum_{j=1}^{n} A_i B_j x^{a_i + b_j}.
\]

The master node needs to recover the value of each submatrix \(A_i B_j\), which corresponds to the coefficient of the monomial \(x^{a_i + b_j}\) in \(h(x)\). Therefore, we need \(x^{a_i + b_j}\) to have distinct degrees for different pairs \((i, j)\).
In other words, we need the following condition:

\begin{Cond}\label{cond:1}
	\[
	a_i + b_j \neq a_k + b_w \quad \text{if} \quad (i, j) \neq (k, w). \tag{4} \label{eq:4}
	\]
\end{Cond}
\noindent The recovery threshold $R = \deg(h) + 1 = \max_{1\leq i\leq m,\ 1\leq j\leq n}\left\{a_i + b_j \right\} + 1$.

In the Polynomial codes presented in \cite{yu2017polynomial}, set $a_i = i - 1$ for $i = 1, 2, \ldots, m$ and $b_j = (j - 1)m$ for $j = 1, 2, \ldots, n$. We have
\[
f(x) = \sum_{i=1}^{m} A_i x^{i-1}, \quad g(x) = \sum_{j=1}^{n} B_j x^{(j-1)m},
\]
and
\[
h(x) = f(x)g(x) = \sum_{i=1}^{m} \sum_{j=1}^{n} A_i B_j x^{i-1 + (j-1)m},
\]
satisfying the condition stated in (\hyperref[eq:4]{4}). The recovery threshold $R = mn$, which is optimal since \(\{a_i + b_j \mid 1 \leq i \leq m, \ 1 \leq j \leq n\}\) has $mn$ distinct values.

\subsection{AG-based Polynomial Codes} \label{sec:3.2}

Let $F/\mathbb{F}_q$ be an algebraic function field with the full constant field $\mathbb{F}_q$. Let $D \in \text{Div}(F)$ be a positive non-special divisor with $\deg(D) = g$ and $\ell(D) = 1$, and let $P \in \mathbb{P}_F$ be a rational point such that $P \not\in \text{supp}(D)$. Let $\tau \in F$ be a local parameter at $P$, i.e., $\nu_P(\tau) = 1$.

In our AG-based Polynomial Codes, consider two matrices \( A \in \mathbb{F}_q^{t \times r} \) and \( B \in \mathbb{F}_q^{r \times s} \), which are partitioned according to the outer product partitioning similar to (\hyperref[eq:3]{3}).
The master node constructs two functions (with matrix coefficients):
\[
f := \sum_{i=1}^{m} A_i f_i, \quad g := \sum_{j=1}^{n} B_j g_j,
\]
with
\[
h = fg = \sum_{i=1}^{m} \sum_{j=1}^{n} A_i B_j f_i g_j.
\]

Now each submatrix \(A_i B_j\) is the coefficient of \(f_i g_j\). Assume \(h\) belongs to a Riemann-Roch space \(\mathcal{L}(G)\) for some divisor \(G\). Therefore, we need \(f_i g_j\) to have distinct discrete valuations at the rational place \(P\) for different pairs \((i, j)\), ensuring that the set \(\{f_i g_j \mid 1 \leq i \leq m, \ 1 \leq j \leq n\}\) can be extended to a basis of \(\mathcal{L}(G)\) as it is \(\mathbb{F}_q\)-linearly independent. Then \(AB\) can be recovered from the expression of \(h\) under this basis. In other words, we need the following conditions:

\begin{Cond}\label{cond:2}
\[
\nu_P(f_i g_j) \neq \nu_P(f_k g_w) \quad \text{if} \quad (i, j) \neq (k, w).  \tag{5} \label{eq:5}
\]
\end{Cond}

In our setup, we consider functions in the Riemann-Roch space $\mathcal{L}(D + \infty P) = \bigcup_{i=0}^{\infty} \mathcal{L}(D + i P)$, i.e., we choose $f_i, g_j \in \mathcal{L}(D + \infty P)$ for $i = 1, 2, \ldots, m$ and $j = 1, 2, \ldots, n$.

\begin{Con} \label{con:1}[AG-based Polynomial codes]	
\begin{itemize}

    \item[(1)] $n=1$. Consider the Riemann-Roch space $\mathcal{L}(D + (m-1)P)$. 
	By Lemma \hyperref[lem:2.2]{2.2}, we can construct a basis $\hat{f}_0, \ldots, \hat{f}_{m-1}$ of $\mathcal{L}(D + (m-1)P)$ satisfying $\hat{f}_0 = 1 \in \mathbb{F}_q$ and
	\[
	\hat{f}_i = \tau^{-i} + \sum_{\ell=1}^{\infty} \lambda_\ell^{(i)} \tau^\ell \quad \text{for} \quad i = 1, \ldots, m-1,
	\]
	where $\lambda_\ell^{(i)} \in \mathbb{F}_q$ for $i = 1, \ldots, m-1$.

 \item[(2)] $n \ge 2$. Consider the Riemann-Roch space $\mathcal{L}(D + (n-1)mP)$. 
	By Lemma \hyperref[lem:2.2]{2.2}, we can construct a basis $\hat{f}_0, \ldots, \hat{f}_{(n-1)m}$ of $\mathcal{L}(D + (n-1)mP)$ satisfying $\hat{f}_0 = 1 \in \mathbb{F}_q$ and
	\[
	\hat{f}_i = \tau^{-i} + \sum_{\ell=1}^{\infty} \lambda_\ell^{(i)} \tau^\ell \quad \text{for} \quad i = 1, \ldots, (n-1)m,
	\]
	where $\lambda_\ell^{(i)} \in \mathbb{F}_q$ for $i = 1, \ldots, (n-1)m$.
	
	Hence, set $f_1 = g_1 = \hat{f}_0 \in \mathbb{F}_q$, and
	\[
	f_i = \hat{f}_{i-1} = \tau^{-(i-1)} + \sum_{\ell=1}^{\infty} \lambda_\ell^{(i-1)} \tau^\ell \quad \text{for} \quad i = 2, \ldots, m,
	\]
	\[
	g_j = \hat{f}_{(j-1)m} = \tau^{-(j-1)m} + \sum_{\ell=1}^{\infty} \lambda_\ell^{(j-1)m} \tau^\ell \quad \text{for} \quad j = 2, \ldots, n.
	\]
\end{itemize}
	
\end{Con}

For Construction \hyperref[con:1]{1}, we have the following theorem:

\begin{Thm}\label{thm:3.1}
For any prime power \( q \) and partitioning parameters \( m, n \ge 1 \) that divide \( t \) and \( s \), respectively, Construction \hyperref[con:1]{1} provides an AG-based DMM scheme in \( \mathbb{F}_q \) with a recovery threshold of \( R = 2g + mn \).

\end{Thm}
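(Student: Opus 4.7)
The target reduces to verifying the two ingredients required by the AG-based DMM framework of Section 2.4: (a) the products $\{f_i g_j\}_{1\le i\le m,\ 1\le j\le n}$ are $\mathbb{F}_q$-linearly independent, so that each matrix coefficient $A_i B_j$ in $h = fg = \sum_{i,j} A_i B_j\, f_i g_j$ is uniquely determined by $h$; and (b) $h$ lies in an explicit Riemann-Roch space $\mathcal{L}(G)$ with $\deg(G)+1 = 2g+mn$, so that by Section 2.2 the master node can recover $h$ from any $2g+mn$ worker evaluations.

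For (a), Lemma 2.2 immediately gives $\nu_P(\hat f_k) = -k$. In case $n=1$ this yields $\nu_P(f_i g_1) = -(i-1)$, which takes $m$ distinct values; in case $n\ge 2$,
\[
\nu_P(f_i g_j) = -(i-1) - (j-1)m,
\]
and because $0 \le i-1 \le m-1$ and $0 \le j-1 \le n-1$, the map $(i,j) \mapsto (i-1)+(j-1)m$ is the standard base-$m$ encoding of the index pair, hence injective on $\{1,\dots,m\}\times\{1,\dots,n\}$. The $mn$ valuations are therefore pairwise distinct and the $f_i g_j$ are $\mathbb{F}_q$-linearly independent. For (b), I would trace Riemann-Roch containments: each $\hat f_k$ from Lemma 2.2 lies in $\mathcal{L}(D+vP)$ and has $\nu_P(\hat f_k) = -k$, so its only poles outside $\text{supp}(D)$ occur at $P$ with order at most $k$, which places $\hat f_k$ in $\mathcal{L}(D+kP)$. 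Hence $f_i \in \mathcal{L}(D+(i-1)P)$ and $g_j \in \mathcal{L}(D+(j-1)mP)$, so
\[
f_i g_j \in \mathcal{L}\bigl(2D + ((i-1)+(j-1)m)P\bigr) \subseteq \mathcal{L}(2D+(mn-1)P),
\]
and therefore $h \in \mathcal{L}(2D+(mn-1)P)$, whose degree is $2g+mn-1$. By Section 2.2, any $2g+mn$ evaluations of $h$ at rational places outside $\text{supp}(D) \cup \{P\}$ determine $h$ uniquely, yielding $R = 2g+mn$. From $h$ the coefficients $A_i B_j$ are extracted by completing $\{f_i g_j\}$ to a basis of $\mathcal{L}(2D+(mn-1)P)$, or equivalently by a triangular induction on the local expansion of $h$ at $P$ starting from the most polar term $\tau^{-(mn-1)}$.

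The main technical point --- and the reason this construction succeeds where a Weierstrass-semigroup basis would not --- is the normalized form $\hat f_k = \tau^{-k} + \sum_{\ell\ge 1} \lambda_\ell^{(k)} \tau^\ell$ supplied by Lemma 2.2: the absence of a $\tau^0$ coefficient guarantees that the product of any two basis elements retains the expected leading order $\tau^{-(i+j)}$ at $P$, which is exactly what permits the valuation-counting argument above and rules out the cancellation phenomena cited in the introduction. Everything else is bookkeeping with divisors, Riemann's bound, and the injectivity of the evaluation map from Section 2.2.
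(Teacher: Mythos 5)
Your proof is correct and follows essentially the same approach as the paper: verify Condition~\ref{cond:2} via the valuation computation $\nu_P(f_i g_j) = -((i-1)+(j-1)m)$ and injectivity of the base-$m$ encoding, then bound $h \in \mathcal{L}(2D + (mn-1)P)$ and read off $R = \deg(2D+(mn-1)P)+1 = 2g+mn$. Your justification that $f_i g_j \in \mathcal{L}(2D+(mn-1)P)$ via the containments $f_i \in \mathcal{L}(D+(i-1)P)$, $g_j \in \mathcal{L}(D+(j-1)mP)$ is in fact slightly more explicit than the paper's, which only cites the minimum valuation at $P$; this is a minor improvement in rigor, not a different route.
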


\begin{proof}
First, we prove that our construction satisfies Condition \hyperref[cond:2]{2}. In our setup, we have
\[
f_i g_j = \tau^{-(i-1 + (j-1)m)} + \sum_{\ell=v}^{\infty} \lambda_{\ell}^{(i,j)} \tau^{\ell},
\]
where $v = -\max\left\lbrace (i-1), (j-1)m \right\rbrace + 1$.
We can get $\nu_P(f_i g_j) = -((i-1) + (j-1)m)$ for $i = 1, \ldots, m$ and $j = 1, \ldots, n$. Thus, $\nu_P(f_i g_j)$ is distinct for different pairs $(i, j)$, which satisfies the condition stated in (\hyperref[eq:5]{5}).

Next, we consider the recovery threshold. Since $$\min \left\lbrace \nu_P(f_i g_j) \right\rbrace = -((m-1) + (n-1)m) = -(mn - 1)$$ we have $f_i g_j \in \mathcal{L}(2D + (mn-1)P)$ for all $i = 1, \ldots, m$ and $j = 1, \ldots, n$. Hence, $h = fg \in \mathcal{L}(2D + (mn-1)P)^{\frac{t}{m} \times \frac{s}{n}}$. Denote the $(i,j)$-th entry of $h$ as $h^{(i,j)}$, then for each $(i,j) \in \left[1, \frac{t}{m}\right] \times \left[1, \frac{s}{n}\right]$, $h^{(i,j)} \in \mathcal{L}(2D + (mn-1)P)$. Therefore, the recovery threshold is $R = \deg(2D + (mn-1)P) + 1 = 2g + mn$.
\end{proof}

Also, if $m$ or $n$ satisfies certain special conditions, we can further reduce the recovery threshold in Construction \hyperref[con:1]{1}. Without loss of generality, we discuss the parameter $m$. Observe that if $m \in W(P)$, where $W(P)$ is the Weierstrass semigroup of $P$, then $2m, \ldots, (n-1)m$ are all in $W(P)$. In this case, we choose $g_j$ in the one-point Riemann-Roch space:

\begin{Con} \label{con:2}[AG-based Polynomial codes for special $m$]	

\noindent Assume $m \in W(P)$, i.e., $m$ is a pole number of $P$. 
Set $f_i$ the same as in Construction \hyperref[con:1]{1} for $i = 1, 2, \ldots, m$ and set $g_1 = 1 \in \mathbb{F}_q$ and $g_j \in \mathcal{L}((j-1)mP) \backslash \mathcal{L}(((j-1)m-1)P)$ for $j = 2, \ldots, n$, i.e.,
\[
g_j = \tau^{-(j-1)m} + \sum_{\ell=-(j-1)m + 1}^{\infty} \hat{\lambda}_\ell^{(j)} \tau^\ell \quad \text{for} \quad j = 2, \ldots, n.
\]

\end{Con}

For Construction \hyperref[con:2]{2}, we have the following theorem:

\begin{Thm}\label{thm:3.2}
For any prime power $q$ and partitioning parameters $m, n \ge 1$ that divide $t$ and $s$ respectively, if $m \in W(P)$ or $n \in W(P)$, where $W(P)$ is the Weierstrass semigroup of $P$, Construction \hyperref[con:2]{2} gives an AG-based DMM scheme in $\mathbb{F}_q$ with a recovery threshold of $R = g + mn$.

\end{Thm}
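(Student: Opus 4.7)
The plan is to mirror the proof of Theorem \ref{thm:3.1}, exploiting the key structural change in Construction \ref{con:2}: because $g_j$ now lies in the one-point Riemann-Roch space $\mathcal{L}((j-1)mP)$ rather than in $\mathcal{L}(D + (j-1)mP)$, the product $f_i g_j$ lives in a space with pole divisor $D + (mn-1)P$ instead of $2D + (mn-1)P$. This is precisely what saves $g$ in the recovery threshold. Well-definedness of the $g_j$ is guaranteed by the hypothesis $m \in W(P)$: since $W(P)$ is a sub-semigroup of $\mathbb{N}$, the multiples $(j-1)m$ all lie in $W(P)$ for $j = 2, \ldots, n$, so $\mathcal{L}((j-1)mP) \setminus \mathcal{L}(((j-1)m-1)P)$ is nonempty and a $g_j$ with $\nu_P(g_j) = -(j-1)m$ exists.

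Next I would verify Condition \ref{cond:2}. Additivity of the valuation gives $\nu_P(f_i g_j) = -(i-1) - (j-1)m$, and these integers are pairwise distinct for $(i, j) \in \{1, \ldots, m\} \times \{1, \ldots, n\}$ by uniqueness of division with remainder by $m$. Consequently, since $f_i \in \mathcal{L}(D + (m-1)P)$ and $g_j \in \mathcal{L}((j-1)mP) \subseteq \mathcal{L}((n-1)mP)$, every product satisfies
\[
f_i g_j \in \mathcal{L}\bigl(D + (m-1)P + (n-1)mP\bigr) = \mathcal{L}\bigl(D + (mn-1)P\bigr),
\]
and so does each entry $h^{(i,w)}$ of $h = fg$. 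By Lemma \ref{lem:2.1}, $\ell(D + (mn-1)P) = mn$, which matches the number of products; since the $f_i g_j$ are $\mathbb{F}_q$-linearly independent (distinct pole orders at $P$), they form a basis, and each submatrix $A_i B_j$ is uniquely recoverable as a coefficient of $h^{(i,w)}$ in this basis. The recovery threshold is therefore $\deg(D + (mn-1)P) + 1 = g + mn$.

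For the case $n \in W(P)$, I would apply a symmetric variant of Construction \ref{con:2} in which the roles of $f_i$ and $g_j$ are interchanged, putting $f_i \in \mathcal{L}((i-1)nP)$ one-point and $g_j \in \mathcal{L}(D + (n-1)P)$, with pole orders $\nu_P(f_i g_j) = -((j-1) + (i-1)n)$; equivalently, the identity $AB = (B^T A^T)^T$ lets one reduce to the $m \in W(P)$ case applied to the transposed product. The only delicate point, which I anticipate as the main obstacle, is the dimension match in the central step: the identity $\ell(D + (mn-1)P) = mn$ matches exactly the number of products $f_i g_j$ with no slack, so they genuinely form a basis. This tight count is essential for unambiguous decoding, and it is the reason we cannot further shrink the ambient Riemann-Roch space below $\mathcal{L}(D + (mn-1)P)$.
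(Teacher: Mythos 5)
Your proof is correct and follows the same route as the paper, which simply says the argument is similar to Theorem 3.1 and omits the details you have filled in: well-definedness of the $g_j$ from the semigroup property, the valuation computation yielding Condition 2, the containment $f_i g_j \in \mathcal{L}(D + (mn-1)P)$, the dimension count via Lemma 2.1 showing the products form a basis, and the transposition trick for the $n \in W(P)$ case.
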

\begin{proof}
The proof is similar to the proof of Theorem \hyperref[thm:3.1]{3.1}. We can easily verify that our construction satisfies Condition \hyperref[cond:2]{2} and \( f_i g_j \in \mathcal{L}(D + (mn-1)P) \) for all \( i = 1, \ldots, m \) and \( j = 1, \ldots, n \). Hence, the recovery threshold is \( R = \deg(D + (mn-1)P) + 1 = g + mn \).
\end{proof}

\begin{Rmk}\label{rmk:3.3}
For \( m \not\in W(P) \) and \( n \not\in W(P) \), we can also reduce the recovery threshold of Construction \hyperref[con:1]{1} by a special choice of \( g_j \) in the one-point Riemann-Roch space, similar to Construction \hyperref[con:2]{2}. A detailed discussion will be provided in Section \hyperref[sec:7]{7}.

\end{Rmk}

Next, we discuss the communication cost and per-worker computation complexity. Let \(N\) be the total number of worker nodes and \(R\) be the recovery threshold.

\begin{itemize}
    \item For communication cost, the master node sends \(O\left(\frac{tr}{m} + \frac{rs}{n}\right)\) symbols to each worker node and receives \(O\left(\frac{ts}{mn}\right)\) symbols from each successful worker node. Therefore, the upload cost is \(O\left(\left(\frac{t}{m} + \frac{s}{n}\right)rN\right)\), and the download cost is \(O\left(\frac{ts}{mn}R\right)\).
    \item For worker computation complexity, since each worker node computes matrix multiplications of size \(\frac{t}{m} \times r\) and \(r \times \frac{s}{n}\), the worker complexity is \(O\left(\frac{trs}{mn}\right)\) using the native multiplication algorithm.
\end{itemize}

	\section{Extending MatDot Codes to Algebraic Function Fields}\label{sec:4}
	In this section, we first introduce MatDot codes \cite{dutta2019optimal}, and then employ Lemma \hyperref[lem:2.2]{2.2} to extend the MatDot codes to the AG case.

	\subsection{MatDot Codes}\label{sec:4.1}

In the MatDot codes presented in \cite{dutta2019optimal}, the matrices are partitioned according to the inner product partitioning. Consider two matrices $A \in \mathbb{F}_q^{t \times r}$ and $B \in \mathbb{F}_q^{r \times s}$. Splitting the matrices $A$ and $B$ both into $p$ submatrices:
\[
A = \begin{pmatrix}
	A_1 & \cdots & A_p
\end{pmatrix}
, \quad B =
\begin{pmatrix}
	B_1 \\
	\vdots \\
	B_p
\end{pmatrix}
, \tag{6} \label{eq:6}
\]
with $A_i \in \mathbb{F}_{q}^{t \times \frac{r}{p}}$ and $B_i \in \mathbb{F}_{q}^{\frac{r}{p} \times s}$. Then their product $AB$ can be expressed as:
\[
AB = \sum_{i=1}^p A_i B_i.
\]

The master node constructs two polynomials (with matrix coefficients):
\[
f(x) := \sum_{i=1}^{p} A_i x^{a_i}, \quad g(x) := \sum_{j=1}^{p} B_j x^{b_j},
\]
with
\[
h(x) = f(x)g(x) = \sum_{i=1}^{p} \sum_{j=1}^{p} A_i B_j x^{a_i + b_j}.
\]

The master node needs to recover the value of each submatrix \( A_i B_i \), which is the coefficient of the monomial \( x^{a_i + b_i} \) in \( h(x) \). Hence, we need \( x^{a_i + b_i} \) to have the same degree \( d \) for \( i = 1, 2, \ldots, p \), and \( x^{a_i + b_j} \) to have a degree different from \( d \) for \( i \neq j \). In other words, we need the following condition:

\begin{Cond}\label{cond:3} Let $d$ be an integer. Then
	\[
	a_i + b_j = d \quad \text{if and only if} \quad i=j. \tag{7} \label{eq:7}
	\]
\end{Cond}
\noindent  The recovery threshold $R = \deg(h) +1 = \text{max}_{1\leq i,j\leq p}\left\lbrace a_i + b_j \right\rbrace +1$.

In the Matdot codes presented in \cite{dutta2019optimal}, set $a_i = i - 1$ for $i = 1, 2, \ldots, p$ and $b_j = p - j$ for $j = 1, 2, \ldots, p$, so $d = p - 1$. We have
\[
f(x) = \sum_{i=1}^{p} A_i x^{i-1}, \quad g(x) = \sum_{j=1}^{p} B_j x^{p-j},
\]
and
\[
h(x) = f(x)g(x) = \sum_{i=1}^{p} \sum_{j=1}^{p} A_i B_j x^{i-j + p - 1},
\]
satisfying the condition stated in (\hyperref[eq:7]{7}). The recovery threshold is $R = 2p - 1$.

\subsection{AG-based MatDot Codes}
We continue the notations used in Subsection~\ref{sec:3.2}.
In our AG-based MatDot Codes, consider two matrices \( A \in \mathbb{F}_q^{t \times r} \) and \( B \in \mathbb{F}_q^{r \times s} \), which are partitioned according to the inner product partitioning similar to (\hyperref[eq:6]{6}).
The master node constructs two functions (with matrix coefficients):
\[
f := \sum_{i=1}^{p} A_i f_i, \quad g := \sum_{j=1}^{p} B_j g_j,
\]
with
\[
h = fg = \sum_{i=1}^{p} \sum_{j=1}^{p} A_i B_j f_i g_j.
\]
Now, each submatrix \(A_i B_i\) is the coefficient of \(f_i g_i\). Therefore, we need \(f_i g_i\) to have a same discrete valuation \(d\) at the point \(P\) for \(i = 1, 2, \ldots, p\), and \(\nu_P(f_i g_j)\neq d\) for \(i \neq j\). Additionally, we require that any $\F_q$-linear combination of \(f_i g_j\) with \(i \neq j\) has discrete valuations not equal to \(d\), i.e, $\nu_P\left(\sum_{i\neq j} a_{i,j} f_i g_j\right) \neq d$. This ensures that $\sum_{i\neq j} A_i B_j f_i g_j$ do not affect the coefficients of \(\{f_i g_i\}_{1\leq i\leq p}\) in $h$, avoiding the cancellation problem. In other words, we need the following condition:
\begin{Cond}\label{cond:4} Let $d$ be an integer. Then
\[
\nu_P(f_i g_j) = d \quad \text{if and only if} \quad i = j, \tag{8} \label{eq:8}
\]
\[
\nu_P\left(\sum a_{i,j} f_i g_j\right) \neq d \quad \text{for} \quad \forall\ i \neq j \; \text{and} \; \; \forall\ a_{i,j} \in \mathbb{F}_q. \tag{9} \label{eq:9}
\]

\end{Cond}

\begin{Rmk} \label{rmk:4.1}
	
In fact, we only need to consider \(f_i g_j\) with discrete valuations less than \(d\), since terms \(f_i g_j\) with discrete valuations greater than \(d\) will not affect the coefficient of \(f_i g_i\) with discrete valuations equal to \(d\). Therefore, the second condition (\hyperref[eq:9]{9}) can be changed to:

\[
\nu_P\left(\sum a_{i,j} f_i g_j\right) \neq d \quad \text{for} \quad \forall\ f_i, g_j \; \text{s.t.} \; \nu_P(f_i g_j) < d \; \text{and} \; \; \forall\ a_{i,j} \in \mathbb{F}_q. \tag{10} \label{eq:10}
\]

\end{Rmk}

In our setup, we consider functions in the Riemann-Roch space $\mathcal{L}(D + \infty P) = \bigcup_{i=0}^{\infty} \mathcal{L}(D + i P)$, i.e., we choose $f_i, g_j \in \mathcal{L}(D + \infty P)$ for $i = 1, 2, \ldots, p$ and $j = 1, 2, \ldots, p$.

\begin{Con} \label{con:3} [AG-based MatDot codes]

 Consider the Riemann-Roch space $\mathcal{L}(D + (p-1)P)$. By Lemma \hyperref[lem:2.2]{2.2}, we can construct a basis $\hat{f}_0, \ldots,\hat{f}_{p-1}$ of $\mathcal{L}(D + (p-1)P)$ satisfying $\hat{f}_0 = 1 \in \mathbb{F}_q$ and
	\[
	\hat{f}_i = \tau^{-i} + \sum_{\ell=1}^{\infty} \lambda_\ell^{(i)} \tau^\ell \quad \text{for} \quad i = 1, \ldots, p-1,
	\]
	where $\lambda_\ell^{(i)} \in \mathbb{F}_q$ for $i = 1, \ldots, p-1$. 	
	
	Hence, set $f_1 = g_p = \hat{f}_0 \in \mathbb{F}_q$, and
\[
f_i = \hat{f}_{i-1} = \tau^{-(i-1)} + \sum_{\ell=1}^{\infty} \lambda_{\ell}^{(i-1)} \tau^{\ell} \quad \text{for} \quad i=2, \ldots, p,
\]

\[
g_j = \hat{f}_{p - j} = \tau^{-(p - j)} + \sum_{\ell=1}^{\infty} \lambda_{\ell}^{(p - j)} \tau^{\ell} \quad \text{for} \quad j=1, \ldots, p-1.
\]

\end{Con}

For Construction \hyperref[con:3]{3}, we have the following theorem:
\begin{Thm}\label{thm:4.2}
For any prime power $q$ and partitioning parameter $p \ge 1$ that divides $r$, Construction \hyperref[con:3]{3} gives an AG-based DMM scheme in $\mathbb{F}_q$ with a recovery threshold of $R = 2g + 2p - 1$.

\end{Thm}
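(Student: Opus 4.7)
The plan is to verify that Construction 3 fits the MatDot framework by (i) checking Conditions (8) and (10) for the chosen $f_i$'s and $g_j$'s, (ii) identifying a Riemann--Roch space containing each entry of $h = fg$, and then (iii) reading off the recovery threshold from the degree of that space.

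First I would compute valuations: by Lemma 2.2, the chosen functions satisfy $\nu_P(f_i) = -(i-1)$ and $\nu_P(g_j) = -(p-j)$, so $\nu_P(f_i g_j) = j-i-p+1$. This equals $d := -(p-1)$ if and only if $i=j$, which establishes Condition (8). Condition (10) is the delicate step, because several pairs $(i,j)$ with $i \neq j$ share the same leading valuation (all pairs with a fixed value of $i-j$), so one cannot rule out cancellation by individual valuations alone. Instead I would examine the local expansion of $f_i g_j$ directly. Multiplying
\[
f_i g_j = \left(\tau^{-(i-1)} + \sum_{\ell \ge 1} \lambda_\ell^{(i-1)} \tau^\ell\right) \left(\tau^{-(p-j)} + \sum_{\ell \ge 1} \lambda_\ell^{(p-j)} \tau^\ell\right),
\]
the four resulting pieces contribute at valuations $j-i-p+1$, $\ge 2-i$, $\ge j-p+1$, and $\ge 2$ respectively. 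For $i,j \in [1,p]$ the last three bounds are all $\ge 2-p > 1-p = d$, so the only term of $f_i g_j$ that can possibly sit at $\tau^d$ is the leading one, which is present precisely when $i=j$. Consequently every $\mathbb{F}_q$-linear combination of $\{f_i g_j : i \neq j\}$ has zero coefficient at $\tau^d$, forcing its valuation to differ from $d$; this is exactly Condition (10) (equivalently Condition (9) via Remark 4.1).

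The verification above also makes the extraction of $AB$ transparent: in the local expansion of $h = \sum_{i,j} A_i B_j f_i g_j$ the coefficient of $\tau^d$ collapses to the diagonal contributions $\sum_i A_i B_i = AB$. For the recovery threshold, observe that $f_i, g_j \in \mathcal{L}(D + (p-1)P)$ by construction, so each $f_i g_j$, and hence each entry of $h$, lies in $\mathcal{L}(2D + (2p-2)P)$. This divisor has degree $2g + 2p - 2$, and by the injectivity discussion at the end of Section 2.2 the entries of $h$ are uniquely determined by their evaluations at any $\deg(2D + (2p-2)P) + 1 = 2g + 2p - 1$ rational places disjoint from $\mathrm{supp}(D) \cup \{P\}$. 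Hence $R = 2g + 2p - 1$ successful workers suffice to interpolate $h$ and then read off $AB$ as above.

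The main obstacle is the cancellation analysis for Condition (10), since the naive valuation argument fails. The crucial leverage comes from the specific shape of the basis furnished by Lemma 2.2, where each $\hat{f}_k$ has local expansion $\tau^{-k} + $ purely positive powers of $\tau$ (no lower-order pole contributions and no constant term). That cleanness is exactly what pushes every cross term in the product $f_i g_j$ strictly above valuation $d$, blocking any linear combination of off-diagonal products from producing a stray $\tau^d$ contribution. This is the structural feature that distinguishes the present approach from earlier constructions that rely only on Weierstrass semigroup data.
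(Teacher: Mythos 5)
Your proposal is correct and follows essentially the same approach as the paper's proof: verify Conditions (8) and (10) via the clean local expansions supplied by Lemma~2.2, then read off the recovery threshold from $\deg\bigl(2D + 2(p-1)P\bigr) + 1$. If anything, your ``four pieces'' decomposition of $f_i g_j$ makes the non-cancellation at $\tau^{-(p-1)}$ more explicit than the paper's one-line valuation claim, but the underlying mechanism --- the absence of terms $\tau^0, \tau^{-1}, \ldots, \tau^{-(i-1)}$ in each basis element --- is identical.
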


\begin{proof}
First, we need to prove that our construction satisfies Condition \hyperref[cond:4]{4}. In this case, we have $f_i = g_{p + 1 - i}$ for $i = 1, \ldots, p$ and
\[
f_i g_j = \tau^{-(p-1 + i - j)} + \sum_{\ell=-(p-2)}^{\infty} \lambda_{\ell}^{(i,j)} \tau^{\ell} \quad \text{for} \quad i \ge j.
\]
We can get $\nu_P(f_i g_i) = -(p-1)$ for $i = 1, \ldots, p$ and $\nu_P\left(\sum a_{i,j} f_i g_j\right) \neq -(p-1)$ for $\forall i > j$ and $\forall a_{i,j} \in \mathbb{F}_q$, which satisfies the condition stated in (\hyperref[eq:8]{8}) and (\hyperref[eq:10]{10}).

Next, we consider the recovery threshold. Since $\min \left\lbrace \nu_P(f_i g_j) \right\rbrace = -(2(p-1))$, we have $f_i g_j \in \mathcal{L}(2D + 2(p-1)P)$ for all $i = 1, \ldots, p$ and $j = 1, \ldots, p$. Hence, $h = fg \in \mathcal{L}(2D + 2(p-1)P)^{t \times s}$. Denote the $(i,j)$-th entry of $h$ as $h^{(i,j)}$, then for each $(i,j) \in [1,t] \times [1,s]$, $h^{(i,j)} \in \mathcal{L}(2D + 2(p-1)P)$. Therefore, the recovery threshold is $R = \deg(2D + 2(p-1)P) + 1 = 2g + 2p - 1$.
\end{proof}

In our setup, we can recover \( AB \) from the local expansion of \( h \) at the point \( P \).
\begin{Rmk}Since
\[
f_i g_j = \tau^{-(p-1 + i - j)} + \sum_{\ell=v}^{\infty} \lambda_{\ell}^{(i,j)} \tau^{\ell},
\]
where \( v = - \max \{ i-1, p-j \} + 1 \ge -(p-2) \), the product of \( f_i \) and \( g_j \) will not affect the coefficients of the term \(\tau^{-(p-1)}\) in the local expansion of \( h \). Hence, we can recover \( AB \) from the coefficient of \(\tau^{-(p-1)}\) in the local expansion of \( h \) at the point \( P \). This coefficient should correctly represent the values of \(\sum_{j=1}^{p} A_{ij} B_{j}\).

\end{Rmk}

Next, we discuss the communication cost and per-worker computation complexity. Let \(N\) be the total number of worker nodes and \(R\) be the recovery threshold.

\begin{itemize}
     \item For communication cost, the master node sends \(O\left(\frac{tr}{p} + \frac{rs}{p}\right)\) symbols to each worker node and receives \(O\left(ts\right)\) symbols from each successful worker node. Therefore, the upload cost is \(O\left(\left(\frac{tr}{p} + \frac{rs}{p}\right)N\right)\), and the download cost is \(O\left(tsR\right)\).

    \item For worker computation complexity, since each worker node computes matrix multiplications of size \(t \times \frac{r}{p}\) and \(\frac{r}{p} \times s\), the worker computation complexity is \(O\left(\frac{trs}{p}\right)\) using the native multiplication algorithm.
\end{itemize}

\section{Extending PolyDot Codes to Algebraic Function Fields}\label{sec:5}
In this section, we first introduce classical PolyDot codes \cite{dutta2019optimal}, Entangled Polynomial codes \cite{yu2020straggler}, and Generalized PolyDot codes \cite{dutta2018unified}. We then propose a new polynomial-based construction of PolyDot code that achieves the same recovery threshold as PolyDot codes. At last, we extend our new PolyDot code to algebraic function field.

	\subsection{PolyDot Codes}\label{sec:5.1}

PolyDot codes were introduced in \cite{dutta2019optimal}. Meanwhile, \cite{yu2020straggler} presented Entangled Polynomial code, and \cite{dutta2018unified} introduced Generalized PolyDot codes, both of which have smaller recovery thresholds. Assume  $A \in \mathbb{F}_q^{t \times r}$ and $B \in \mathbb{F}_q^{r \times s}$. In all these cases, $A, B$ are partitioned as follows
\[
A =
\begin{pmatrix}
	A_{11} & \cdots & A_{1p} \\
	\vdots & \ddots & \vdots \\
	A_{m1} & \cdots & A_{mp}
\end{pmatrix}
, \quad
B =
\begin{pmatrix}
	B_{11} & \cdots & B_{1n} \\
	\vdots & \ddots & \vdots \\
	B_{p1} & \cdots & B_{pn}
\end{pmatrix}
, \tag{11} \label{eq:11}
\]
with $A_{ij} \in \mathbb{F}_q^{\frac{t}{m} \times \frac{r}{p}}$ and $B_{kw} \in \mathbb{F}_q^{\frac{r}{p} \times \frac{s}{n}}$. Their product $AB$ can be expressed as
\[
AB =
\begin{pmatrix}
	C_{11} & \cdots & C_{1n} \\
	\vdots & \ddots & \vdots \\
	C_{m1} & \cdots & C_{mn}
\end{pmatrix}
,
\]
where $C_{iw} = \sum_{j=1}^p A_{ij} B_{jw}$.

The master node constructs two polynomials (with matrix coefficients):
\[
f(x) := \sum_{i=1}^{m} \sum_{j=1}^{p}  A_{ij}x^{(i-1)\alpha +(j-1)\beta}, \quad g(x) := \sum_{k=1}^{p} \sum_{w=1}^{n}B_{kw} x^{(p-k)\beta + (w-1) \theta},
\]
with
\[
h(x) = f(x)g(x) =  \sum_{i=1}^{m} \sum_{j=1}^{p} \sum_{k=1}^{p} \sum_{w=1}^{n}A_{ij} B_{kw} x^{(i-1)\alpha + (p-1 -k+j)\beta + (w-1) \theta}.
\]
where $\alpha,\beta,\theta \in \NN$.

\begin{Rmk} \label{rmk:5.1}
	Note that the degree selection in $x^{(i-1)\alpha +(j-1)\beta}$ and $x^{(p-k)\beta + (w-1) \theta}$ relies on the idea from MatDot codes. In this case, for each $(i, w) \in [1,m] \times [1,n]$, the $(i,w)$-th submatrix $C_{i,w}=\sum_{j=1}^{p}A_{ij} B_{jw}$ of $AB$ is exactly the coefficient of the monomial $x^{(i-1)\alpha + (p-1)\beta + (w-1) \theta}$ in $h(x)$.

\end{Rmk}

For each \((i, w) \in [1, m] \times [1, n]\), let \( d_{i,w} = (i-1)\alpha + (p-1)\beta + (w-1)\theta \). We require that \( x^{(i'-1)\alpha + (p-1 - k' + j')\beta + (w'-1)\theta} \) has the degree \( d_{i,w} \) if and only if \( i' = i \), \( j' = k' \), and \( w' = w \) for any \((i', j', k', w') \in [1, m] \times [1, p] \times [1, p] \times [1, n] \). Then \( AB \) can be recovered from the coefficients of \(\{ x^{d_{i,w}} \}_{(i, w) \in [1, m] \times [1, n]} \) in \( h(x) \). In other words, we need the following condition:

\begin{Cond}\label{cond:5}
Let $d_{i,w}=(i-1)\alpha + (p-1)\beta + (w-1)\theta$ for each $(i,w) \in [1,m] \times [1,n]$, where $\alpha,\beta,\theta \in \NN$. Then,
	\[
	i'\alpha + (j'-k')\beta + w' = d_{i,w}\quad \text{if and only if} \quad (i', j'-k',w')= (i, 0,w). \tag{12} \label{eq:12}
	\]
\end{Cond}

\noindent The recovery threshold $R = \deg(h) +1 = (m-1)\alpha + 2(p-1)\beta + (n-1)\theta +1$.

In PolyDot codes presented in \cite{dutta2019optimal}, set $\alpha = 1$, $\beta = m$, and $\theta = m(2p-1)$. We have
\[
f(x) = \sum_{i=1}^{m} \sum_{j=1}^{p} A_{ij}x^{(i-1) + (j-1)m}, \quad g(x) = \sum_{k=1}^{p} \sum_{w=1}^{n} B_{kw} x^{(p-k)m + (w-1)m(2p-1)},
\]
and
\[
h(x) = f(x)g(x) = \sum_{i=1}^{m} \sum_{j=1}^{p} \sum_{k=1}^{p} \sum_{w=1}^{n} A_{ij} B_{kw} x^{(i-1) + (p-1 - k + j)m + (w-1)m(2p-1)},
\]
satisfying the condition stated in (\hyperref[eq:12]{12}). The recovery threshold is $R = (2p-1)mn$.

In Entangled Polynomial codes presented in \cite{yu2020straggler} and  Generalized PolyDot codes presented in \cite{dutta2018unified}, set $\alpha = p$, $\beta = 1$, and $\theta = mp$. We have
\[
f(x) = \sum_{i=1}^{m} \sum_{j=1}^{p} A_{ij} x^{(i-1)p + (j-1)}, \quad g(x) = \sum_{k=1}^{p} \sum_{w=1}^{n} B_{kw} x^{p-k + (w-1)mp},
\]
and
\[
h(x) = f(x)g(x) = \sum_{i=1}^{m} \sum_{j=1}^{p} \sum_{k=1}^{p} \sum_{w=1}^{n} A_{ij} B_{kw} x^{(i-1)p + p-1 - k + j + (w-1)mp},
\]
satisfying the condition stated in (\hyperref[eq:12]{12}). The recovery threshold is $R = pmn + p - 1$.

To extend PolyDot Codes to the AG-based case, we consider the following new construction:
Set \(\alpha = 1\), \(\beta = mn\), and \(\theta = m\).
We have
	\[
	f(x) = \sum_{i=1}^{m} \sum_{j=1}^{p} A_{ij} x^{(i-1) + (j-1)mn}, \quad g(x) = \sum_{k=1}^{p} \sum_{w=1}^{n} B_{kw} x^{mn(p-k) + (w-1)m},
	\]
	and
	\[
	h(x) = f(x)g(x) = \sum_{i=1}^{m} \sum_{j=1}^{p} \sum_{k=1}^{p} \sum_{w=1}^{n} A_{ij} B_{kw} x^{(i-1) + mn(p-1 - k + j) + (w-1)m},
	\]
	satisfying the condition stated in (\hyperref[eq:12]{12}). The recovery threshold is $R = (2p-1)mn$.

\subsection{AG-based PolyDot Codes}
Let $F/\mathbb{F}_q$ be an algebraic function field with the full constant field $\mathbb{F}_q$. Let $P$ and $Q$ be two distinct rational points in $\mathbb{P}_F$. Let $\tau \in F$ be a local parameter at $P$, i.e., $\nu_P(\tau) = 1$.

In our AG-based PolyDot Codes, consider two matrices \( A \in \mathbb{F}_q^{t \times r} \) and \( B \in \mathbb{F}_q^{r \times s} \), which are partitioned according to the general product partitioning similar to (\hyperref[eq:11]{11}).
The master node constructs two functions (with matrix coefficients):
\[
f(x) := \sum_{i=1}^{m} \sum_{j=1}^{p}  A_{ij}f_{i,j}, \quad g(x) := \sum_{k=1}^{p} \sum_{w=1}^{n}B_{kw} g_{k,w},
\]
with
\[
h = fg =  \sum_{i=1}^{m} \sum_{j=1}^{p} \sum_{k=1}^{p} \sum_{w=1}^{n}A_{ij} B_{kw} f_{i,j}g_{k,w}.
\]

Now, each submatrix $\sum_{j=1}^{p} A_{ij} B_{jw}$ is the coefficient of $f_{i,j} g_{j,w}$. Similar to the case in PolyDot codes, we need to satisfy the following two requirements: (1) Fix $(i,w) \in [1,m] \times [1,n]$, we need $f_{i,j} g_{j,w}$ to have the same discrete valuation $d_{i,w}$ at the point $P$ for $j = 1, 2, \ldots, p$; (2) For any $(i', j', k', w') \in [1,m] \times [1,p] \times [1,p] \times [1,n]$, the discrete valuation of $f_{i',j'} g_{k',w'}$ at the point $P$ is $d_{i,w}$ only if $i' = i$, $j' = k'$, and $w' = w$. In conclusion, we need the following condition:
\begin{Cond}\label{cond:6}
Let $d_{i,w}$ be distinct integers for each $(i,w) \in [1,m] \times [1,n]$. Then,
\[
\nu_P(f_{i',j'} g_{k',w'}) = d_{i,w} \quad \text{if and only if} \quad (i', j' - k', w')=(i, 0, w) . \tag{13} \label{eq:13}
\]

\end{Cond}

\begin{Rmk} \label{rmk:5.2}
Moreover, as discussed in Section \hyperref[sec:4]{4}, for each $(i, w) \in [1,m] \times [1,n]$, we require that those $f_{i',j'} g_{k',w'}$ with discrete valuations less than $d_{i,w}$ have discrete valuations not equal to $d_{i,w}$ after any $\mathbb{F}_q$-linear combination to avoid the cancellation problem . However, by using local expansions in our construction, we can ensure that the above condition is always satisfied.

\end{Rmk}

Unlike in Sections \hyperref[sec:3]{3} and \hyperref[sec:4]{4}, here we no longer use the Riemann-Roch space constructed by a non-special divisor \(D\), but instead consider the following space:
\[
\mathcal{L}((2g + a)Q + (a + b)P),
\]
where \(a, b \ge 0\). We have the following lemma:

\begin{Lem} \label{lem:5.3}
Let $P$ and $Q$ be two distinct rational points in $\mathbb{P}_F$. Let $\tau \in F$ be a local parameter at $P$, i.e., $\nu_P(\tau) = 1$. Then, for each $a, b \ge 0$, we can construct $\hat{f}_0, \hat{f}_1, \ldots, \hat{f}_{a+b} \in \mathcal{L}((2g + a)Q + (a+b)P)$ satisfying 
\[
\hat{f}_i = \tau^{-i} + \sum_{\ell=a+1}^{\infty} \lambda_\ell^{(i)} \tau^\ell \quad \text{for} \quad i = 0, \ldots, a+b,
\]
where $\lambda_\ell^{(i)} \in \mathbb{F}_q$ for $i = 0, \ldots, a+b$. Observe that $\hat{f}_i \in \mathcal{L}((2g + a)Q + iP) \setminus \mathcal{L}((2g + a)Q + (i-1)P)$ for $i = 0, \ldots, a+b$.

\end{Lem}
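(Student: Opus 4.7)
The plan is to mirror the argument of Lemma 2.2, but replacing the non-special divisor $D$ by the multiple of $Q$ of the form $(2g+a)Q$, and carefully accounting for the extra coefficient positions at $\tau^0, \tau^1, \ldots, \tau^{a}$ that the statement asks to vanish. Concretely, I would define the $\mathbb{F}_q$-linear ``truncation'' map
\[
\phi : \mathcal{L}((2g+a)Q + (a+b)P) \longrightarrow \mathbb{F}_q^{2a+b+1}
\]
sending a function $f$, with local expansion $f = \sum_{\ell \ge -(a+b)} \lambda_\ell \tau^\ell$ at $P$, to the vector $(\lambda_{-(a+b)}, \lambda_{-(a+b)+1}, \ldots, \lambda_{-1}, \lambda_0, \lambda_1, \ldots, \lambda_a)$ of the $2a+b+1$ relevant coefficients. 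To produce the desired $\hat{f}_i$ it then suffices to show $\phi$ is surjective, and to take $\hat{f}_i$ to be any preimage of the standard basis vector corresponding to the $\tau^{-i}$ position.

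For surjectivity I will compare dimensions via Riemann--Roch. The domain has degree $2g+2a+b \ge 2g-1$, so by the standard consequence of Riemann--Roch one has $\ell((2g+a)Q + (a+b)P) = g + 2a + b + 1$. The kernel of $\phi$ consists exactly of those $f$ with $\nu_P(f) \ge a+1$, which is the same as $\mathcal{L}((2g+a)Q - (a+1)P)$. This divisor has degree $2g-1$, and since the canonical complement then has degree $-1$, Riemann--Roch yields $\ell((2g+a)Q - (a+1)P) = g$. Therefore $\mathrm{rank}(\phi) = (g+2a+b+1) - g = 2a+b+1$, which equals $\dim \mathbb{F}_q^{2a+b+1}$, so $\phi$ is surjective.

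With surjectivity in hand, for each $i \in \{0,1,\ldots,a+b\}$ I pick $\hat{f}_i$ to be any preimage of the vector in $\mathbb{F}_q^{2a+b+1}$ that is $1$ in the $\tau^{-i}$ slot and $0$ elsewhere. By construction the local expansion of $\hat{f}_i$ then has $\lambda_{-i}^{(i)}=1$, vanishing coefficients at $\tau^\ell$ for $\ell \in \{-(a+b),\ldots,-1,0,1,\ldots,a\} \setminus \{-i\}$, and arbitrary coefficients at $\tau^\ell$ for $\ell \ge a+1$, which is exactly the required form $\hat{f}_i = \tau^{-i} + \sum_{\ell \ge a+1} \lambda_\ell^{(i)} \tau^\ell$. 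Finally, the last assertion follows immediately: the local expansion shows $\nu_P(\hat{f}_i) = -i$, hence $\hat{f}_i \in \mathcal{L}((2g+a)Q + iP)$ while the strict inequality $-i < -(i-1)$ rules out membership in $\mathcal{L}((2g+a)Q + (i-1)P)$.

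The only place requiring any care is the Riemann--Roch bookkeeping for the kernel; that is what dictates the precise choice of the coefficient $2g+a$ in front of $Q$. Choosing any smaller coefficient of $Q$ would leave room for a nontrivial ``canonical'' contribution (i.e.\ $\ell(K-G) > 0$) and the dimension count would no longer force surjectivity, so the specific shape of the divisor $(2g+a)Q + (a+b)P$ is what makes the argument go through cleanly. Once this is verified, the construction of the $\hat{f}_i$'s is a direct transcription of the surjectivity-plus-preimage recipe already used in Lemma 2.2.
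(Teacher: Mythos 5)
Your proof is correct and takes essentially the same approach as the paper: define the truncation map $\phi$ onto the $2a+b+1$ coefficient slots, identify its kernel as $\mathcal{L}((2g+a)Q - (a+1)P)$, use Riemann--Roch to show the dimension difference is $2a+b+1$ so $\phi$ is surjective, and take preimages of standard basis vectors. The paper states the dimension count more tersely, whereas you spell out the two Riemann--Roch evaluations (using $\deg \ge 2g-1$ for the domain and $\deg(K-G) = -1$ for the kernel), but the argument is the same.
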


\begin{proof}

For any function $\hat{f} \in \mathcal{L}((2g + a)Q + (a+b)P)$, we have $\nu_P(\hat{f}) \ge -(a+b)$. By Equation~\eqref{eq:2}, assume the local expansion of $\hat{f}$ at $P$ is
\[
\hat{f} = \sum_{\ell=-(a+b)}^{\infty} \lambda_\ell \tau^\ell, \quad \lambda_\ell \in \mathbb{F}_q.
\]

Consider the following mapping $\phi: \mathcal{L}((2g + a)Q + (a+b)P) \to \mathbb{F}_q^{2a+b+1}$ given by 
\[
\hat{f} = \sum_{\ell=-(a+b)}^{\infty} \lambda_\ell \tau^\ell \longmapsto (\lambda_0, \lambda_{-1}, \ldots, \lambda_{-(a+b)}, \lambda_{1}, \ldots, \lambda_{a}).
\]

Observe that $\phi$ is $\mathbb{F}_q$-linear with kernel $\ker(\phi) = \mathcal{L}((2g + a)Q - (1 + a)P)$. Since $\ell((2g + a)Q + (a+b)P) - \ell((2g + a)Q - (1 +a)P) = 2a + b + 1$, the mapping $\phi$ is a surjection. 
For each \( i \) satisfying \( 0 \le i \le a+b \), consider the vector \( e_i = (0, \ldots, 1, \ldots, 0) \in \mathbb{F}_q^{2a + b + 1} \), where only the \( i+1 \)-th position has a nonzero value of 1. Let \(\hat{f}_i\) be the preimage of \( e_i \), i.e., the local expansion of \(\hat{f}_i\) satisfies
$$\hat{f}_i = \tau^{-i} + \sum_{\ell=a+1}^{\infty} \lambda_\ell^{(i)} \tau^\ell \quad \text{for} \quad i = 0, \ldots, a+b.$$
\end{proof}

 In the following Construction \hyperref[con:4]{4}, we extend our new construction for PolyDot codes in Section \hyperref[sec:5.1]{5.1} to the AG case. Depending on the value of $mn$, we categorize it into the following two cases: (1) $m = 1$ or $m \ge n \ge 2$; (2) $n = 1$ or $n > m \ge 2$.

\begin{Con} \label{con:4}[AG-based PolyDot codes]
\begin{itemize}

    \item[(1)]	$m = 1$ or $m \ge n \ge 2$. We set $a = m(n-1)$ and $b = (p-1)mn$. Consider the Riemann-Roch space $\mathcal{L}((2g + m(n-1))Q + ((p-1)mn + m(n-1))P)$. By Lemma \hyperref[lem:5.3]{5.3}, we can construct $\hat{f}_0, \hat{f}_1, \ldots, \hat{f}_{(p-1)mn + m(n-1)} \in \mathcal{L}((2g + m(n-1))Q + ((p-1)mn + m(n-1))P)$ satisfying 
\[
\hat{f}_i = \tau^{-i} + \sum_{\ell=m(n-1)+1}^{\infty} \lambda_\ell^{(i)} \tau^\ell \quad \text{for} \quad i = 0, \ldots, (p-1)mn + m(n-1),
\]
where $\lambda_\ell^{(i)} \in \mathbb{F}_q$ for $i = 0, \ldots, (p-1)mn + m(n-1)$. Hence, set 
\[
f_{i,j} = \hat{f}_{(i-1) + (j-1)mn} = \tau^{-((i-1) + (j-1)mn)} + \sum_{\ell=m(n-1)+1}^{\infty} \lambda_\ell^{((i-1) + (j-1)mn)} \tau^\ell
\]
for $(i,j) \in [1,m] \times [1,p]$, and set
\[
g_{k,w} = \hat{f}_{(p-k)mn + m(w-1)} = \tau^{-((p-k)mn + m(w-1))} + \sum_{\ell=m(n-1)+1}^{\infty} \lambda_\ell^{((p-k)mn + m(w-1))} \tau^\ell
\]
for $(k,w) \in [1,p] \times [1,n]$.

 \item[(2)] $n = 1$ or $n > m \ge 2$. We set $a = n(m-1)$ and $b = (p-1)mn$. Consider the Riemann-Roch space $\mathcal{L}((2g + n(m-1))Q + ((p-1)mn + n(m-1))P)$. By Lemma \hyperref[lem:5.3]{5.3}, we can construct $\hat{f}_0, \hat{f}_1, \ldots, \hat{f}_{(p-1)mn + n(m-1)} \in \mathcal{L}((2g + n(m-1))Q + ((p-1)mn + n(m-1))P)$ satisfying 
\[
\hat{f}_i = \tau^{-i} + \sum_{\ell=n(m-1)+1}^{\infty} \lambda_\ell^{(i)} \tau^\ell \quad \text{for} \quad i = 0, \ldots, (p-1)mn + n(m-1),
\]
where $\lambda_\ell^{(i)} \in \mathbb{F}_q$ for $i = 0, \ldots, (p-1)mn + n(m-1)$. Hence, set 
\[
f_{i,j} = \hat{f}_{(i-1)n + (j-1)mn} = \tau^{-((i-1)n + (j-1)mn)} + \sum_{\ell=n(m-1)+1}^{\infty} \lambda_\ell^{((i-1)n + (j-1)mn)} \tau^\ell
\]
for $(i,j) \in [1,m] \times [1,p]$, and set
\[
g_{k,w} = \hat{f}_{(p-k)mn + (w-1)} = \tau^{-((p-k)mn + (w-1))} + \sum_{\ell=n(m-1)+1}^{\infty} \lambda_\ell^{((p-k)mn + (w-1))} \tau^\ell
\]
for $(k,w) \in [1,p] \times [1,n]$.

\end{itemize}
\end{Con}

For Construction \hyperref[con:4]{4}, we have the following theorem:

\begin{Thm}\label{thm:5.4}
For any prime power $q$ and partitioning parameters $m, p, n \ge 1$ that divide $t, r, s$, respectively, Construction \hyperref[con:4]{4} gives an AG-based PoltDot code over $\mathbb{F}_q$ with a recovery threshold of 
\[
R = \left\{
\begin{array}{ll}
	4g + (2p-1)mn + 2mn - 2m & \text{if } m = 1 \; \text{or} \; m \ge n \ge 2, \\
	4g + (2p-1)mn + 2mn - 2n & \text{if } n = 1 \; \text{or} \; n > m \ge 2.
\end{array}
\right.
\]

\end{Thm}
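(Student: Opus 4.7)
The plan is to adapt the template of Theorems 3.1 and 4.2 to the two-point Riemann-Roch space $\mathcal{L}((2g+a)Q+(a+b)P)$ supplied by Lemma 5.3, with $(a,b)=(m(n-1),(p-1)mn)$ in case (1) and the roles of $m$ and $n$ swapped in case (2). By symmetry it suffices to treat case (1). Setting $u_{i,j}=(i-1)+(j-1)mn$ and $u'_{k,w}=(p-k)mn+(w-1)m$, so that $f_{i,j}=\hat{f}_{u_{i,j}}$ and $g_{k,w}=\hat{f}_{u'_{k,w}}$, I first verify Condition 6 by analyzing when $u_{i',j'}+u'_{k',w'}=d_{i,w}$, where $d_{i,w}=(i-1)+(p-1)mn+(w-1)m$. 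Rewriting the equation as $(i'-i)+(j'-k')mn+m(w'-w)=0$ and using the bound $|(i'-i)+m(w'-w)|\le mn-1$ forces $j'=k'$, after which the uniqueness of the mixed-radix representation $(i'-1)+m(w'-1)$ with $i'-1\in[0,m-1]$ and $w'-1\in[0,n-1]$ forces $i'=i$ and $w'=w$.

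The main obstacle, and the reason for the case split in the theorem, is ruling out the tail cancellations discussed in Remark 5.2. Expanding
\[
\hat{f}_u\hat{f}_{u'}=\tau^{-(u+u')}+\tau^{-u}\sum_{\ell\ge a+1}\lambda_\ell^{(u')}\tau^\ell+\tau^{-u'}\sum_{\ell\ge a+1}\lambda_\ell^{(u)}\tau^\ell+(\text{quadratic tail}),
\]
one sees that the coefficient of $\tau^{-d_{i,w}}$ in $f_{i',j'}g_{k',w'}$ can be nonzero only through (i) the leading term when $u_{i',j'}+u'_{k',w'}=d_{i,w}$ (already handled above), (ii) the first cross-sum when $u_{i',j'}\ge d_{i,w}+a+1$, or (iii) the second cross-sum when $u'_{k',w'}\ge d_{i,w}+a+1$; the quadratic tail cannot contribute since $-d_{i,w}\le 0<2(a+1)$. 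A direct computation gives $\max u'_{k',w'}=pmn-m$ and $\min(d_{i,w}+a+1)=pmn-m+1$, so (iii) never occurs, while (ii) demands $m(2-n)\ge 2$, a condition ruled out precisely by the case (1) hypothesis $m=1$ or $m\ge n\ge 2$.

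Once cancellations are ruled out, each product $f_{i',j'}g_{k',w'}$ lies in $\mathcal{L}((4g+2a)Q+(u_{i',j'}+u'_{k',w'})P)$ and $\max(u_{i',j'}+u'_{k',w'})=(2p-1)mn-1$. Hence every entry of $h$ lies in $\mathcal{L}((4g+2m(n-1))Q+((2p-1)mn-1)P)$, whose degree is $4g+2mn-2m+(2p-1)mn-1$; adding one yields $R=4g+(2p-1)mn+2mn-2m$, as claimed. The analogous argument for case (2), carried out with $a=n(m-1)$ throughout and with the roles of $m$ and $n$ interchanged in $u_{i,j}$ and $u'_{k,w}$, yields $R=4g+(2p-1)mn+2mn-2n$.
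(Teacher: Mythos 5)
Your proof is correct and follows essentially the same route as the paper's: verify Condition~\hyperref[cond:6]{6} by a mixed-radix argument on the exponents, bound the tail of each product $\hat{f}_u\hat{f}_{u'}$ to rule out cancellations, and then read off the recovery threshold as $\deg\bigl((4g+2a)Q+((2p-1)mn-1)P\bigr)+1$. Your itemized analysis of the three potential tail contributions (i)--(iii) is a more explicit unpacking of the paper's single bound $v\ge -(p-1)mn+1$, and it makes the necessity of the case split $m=1$ or $m\ge n\ge 2$ clearer than the paper does (the paper uses that hypothesis silently when asserting $\max\{u,u'\}\le (p-1)mn+a$).
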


\begin{proof}
We only prove the case $m = 1$ or $m \ge n \ge 2$ since the proof of the case $n = 1$ or $n > m \ge 2$ is similar. First, we need to prove that our construction satisfies Condition \hyperref[cond:6]{6}. In our setup, we have
\[
f_{i,j} g_{k,w} = \tau^{-((i-1) + (p - 1 + j - k)mn + m(w-1))} + \sum_{\ell=v}^{\infty} \lambda_{\ell}^{(ij,kw)} \tau^{\ell},
\]
where $v = -\max\left\lbrace (i-1) + (j-1)mn, (p-k)mn + m(w-1) \right\rbrace +m(n-1)+1 \ge -(p-1)mn+1.$

We can obtain $\nu_P(f_{i,j} g_{j,w}) = d_{i,w} = -((i-1) + (p-1)mn + m(w-1))$. Additionally, $\nu_P(f_{i',j'} g_{k',w'}) = -((i'-1) + (p-1 + j' - k')mn + m(w'-1))$, which could be $d_{i,w}$ only if $(i', j' - k', w') = (i, 0, w)$, satisfying the condition stated in (\hyperref[eq:13]{13}).

Next, we consider the recovery threshold. Since $$\min \left\lbrace \nu_P(f_{i,j} g_{k,w}) \right\rbrace = -((m-1) + 2mn(p - 1) + m(n-1)) = -((2p-1)mn-1)$$ we have $f_{i,j} g_{k,w} \in \mathcal{L}((4g + 2(n-1)m)Q + ((2p-1)mn-1)P)$ for all $i, j, k, w$. Hence, $h = fg \in \mathcal{L}((4g + 2(n-1)m)Q + ((2p-1)mn-1)P)^{\frac{t}{m} \times \frac{s}{n}}$. Denote the $(i,w)$-th entry of $h$ as $h^{(i,w)}$, then for each $(i,w) \in \left[1, \frac{t}{m}\right] \times \left[1, \frac{s}{n}\right]$, $h^{(i,w)} \in \mathcal{L}((4g + 2(n-1)m)Q + ((2p-1)mn-1)P)$. Therefore, the recovery threshold $R = \deg((4g + 2(n-1)m)Q + ((2p-1)mn-1)P) + 1 = 4g + (2p-1)mn + 2mn - 2m$.
\end{proof}

\begin{Rmk}\label{rmk:5.5}
Using the same idea as in Construction \hyperref[con:4]{4}, we can extend Entangled Polynomial codes \cite{yu2020straggler} or Generalized PolyDot codes \cite{dutta2018unified} to the AG-based case. This extension will result in an AG-based DMM scheme in $\mathbb{F}_q$ with a recovery threshold of
\[
R = \left\{
\begin{array}{ll}
	4g + 3nmp - 2mp + p - 1 & \text{if } m = 1 \; \text{or} \; m \ge n \ge 2, \\
	4g + 3nmp - 2np + p - 1 & \text{if } n = 1 \; \text{or} \; n > m \ge 2.
\end{array}
\right.
\]
When $p=1$, the above recovery threshold is equal to that of Construction \hyperref[con:4]{4}, and when $p \ge 2$, the above recovery threshold is always larger than that of Construction \hyperref[con:4]{4}.

\end{Rmk}

In our setup, we can recover \( AB \) from the local expansion of \( h \) at the point \( P \).

\begin{Rmk} For \( m = 1 \) or \( m \ge n \ge 2 \), we have
\[
f_{i,j} g_{k,w} = \tau^{-((i-1) + (p - 1 + j - k)mn + m(w-1))} + \sum_{\ell=v}^{\infty} \lambda_{\ell}^{(ij,kw)} \tau^{\ell},
\]
where
\[
v = -\max\left\lbrace (i-1) + (j-1)mn, (p-k)mn + m(w-1) \right\rbrace + m(n-1) + 1 \ge -(p-1)mn + 1.
\]

Thus, the product of \( f_{i,j} \) and \( g_{k,w} \) will not affect the values of the coefficients of the term
\[
\left\lbrace \tau^{-((i-1) + (p-1)mn + m(w-1))} \right\rbrace_{(i,w) \in [1,m] \times [1,n]}
\]
in the local expansion of \( h \), since $-((i-1) + (p-1)mn + m(w-1)) < -(p-1)mn + 1$. Hence, we can recover \( AB \) from the coefficient of the term \( \tau^{-((i-1) + (p-1)mn + m(w-1))} \) in the local expansion of \( h \) at the point \( P \). This coefficient should correctly represent the values of \( C_{iw} = \sum_{j=1}^p A_{ij} B_{jw} \).

For \( n = 1 \) or \( n \ge m \ge 2 \), through a similar discussion as before, we can recover \( AB \) from the coefficient of the term \( \tau^{-((i-1)n + (p-1)mn + (w-1))} \) in the local expansion of \( h \) at the point \( P \). This coefficient should correctly represent the values of \( C_{iw} = \sum_{j=1}^p A_{ij} B_{jw} \).

The reason why we no longer use the non-special divisor \( D \) in our construction of AG-based PolyDot codes is that, when using the non-special divisor \( D \) as in Sections \hyperref[sec:3]{3} and \hyperref[sec:4]{4}, the product of \( f_{i,j} \) and \( g_{k,w} \) will influence the values of the coefficients of the term \( \tau^{-((i-1) + (p-1)mn + m(w-1))} \) (resp., \( \tau^{-(n(i-1) + (p-1)mn + (w-1))} \) ) in the local expansion of \( h \). Hence, we cannot correctly recover \( AB \) from the local expansion of \( h \).

\end{Rmk}

Next, we discuss the communication cost and per-worker computation complexity. Let \(N\) be the total number of worker nodes and \(R\) be the recovery threshold.

\begin{itemize}
    \item For communication cost, the master node sends \(O\left(\frac{tr}{mp} + \frac{rs}{np}\right)\) symbols to each worker node and receives \(O\left(\frac{ts}{mn}\right)\) symbols from each successful worker node. Therefore, the upload cost is \(O\left(\left(\frac{tr}{mp} + \frac{rs}{np}\right)N\right)\), and the download cost is \(O\left(\frac{ts}{mn} R\right)\).

    \item For worker computation complexity, each worker node computes matrix multiplications of size \(\frac{t}{m} \times \frac{r}{p}\) and \(\frac{r}{p} \times \frac{s}{n}\). Thus, the worker computation complexity is \(O\left(\frac{trs}{mpn}\right)\) using the native multiplication algorithm.
\end{itemize}

\section{Decoding Procedures} \label{sec:6}
In this section, we use the notations from previous sections unless specified otherwise. Under the general framework of AG-based DMM, we now show how to decode the product \(AB\) given evaluations of \(h = f_A f_B\) at \(R\) rational points.

Assume the matrix function \( h \) has entries in a Riemann-Roch space \(\mathcal{L}(G)\) for some divisor \( G \), and the set of evaluation points is \(\mathcal{P} = \{P_1, \ldots, P_R\}\). Then the corresponding AG-based code is \( C(G, \mathcal{P}) \). According to our constructions, \( AB \) is related to the coefficients of certain terms in the local expansion of \( h \) at the rational place \( P \). Thus, the decoding procedure consists of two steps:
\begin{itemize}
    \item[(i)] Compute the function \( h = f_A f_B \) from its evaluations at \( R \) rational places using the decoding algorithm for the algebraic geometry code \( C(G, \mathcal{P}) \).
    \item[(ii)] Recover \( AB \) from the local expansion of \( h \) at \( P \).
\end{itemize}

 Note that the recovery threshold is defined as the minimum number of evaluations required to uniquely determine \( h \). Thus, Step (i) will output the unique \( h \). For computational complexity analysis, we detail the decoding procedures for three types of constructions as follows.


Assume \(\Lambda_1, \Lambda_2, \dots, \Lambda_K\) is a basis of \(\mathcal{L}(G)\) and their local expansions at \( P \) are
\[
\Lambda_i = \sum_{j=-v}^{\infty} \lambda^{(i)}_j \tau^j, \quad \text{for} \ i = 1, 2, \dots, K.
\]
Assume \( h = X_1 \Lambda_1 + X_2 \Lambda_2 + \dots + X_K \Lambda_K \),
where each coefficient \( X_i \) is a matrix over \(\mathbb{F}_q\) with the same size as \( h \). Without loss of generality, assume we are given \( R \) evaluations \( h(P_1), \dots, h(P_R) \). Then
\[
(X_1, X_2, \dots, X_K)
\begin{pmatrix}
    \Lambda_1(P_1) & \Lambda_1(P_2) & \cdots & \Lambda_1(P_R) \\
    \Lambda_2(P_1) & \Lambda_2(P_2) & \cdots & \Lambda_2(P_R) \\
    \vdots & \vdots & \ddots & \vdots \\
    \Lambda_K(P_1) & \Lambda_K(P_2) & \cdots & \Lambda_K(P_R)
\end{pmatrix} = (h(P_1), h(P_2), \dots, h(P_R)),
\tag{14}\label{eq:14}
\]
where the product $X_i\cdot \Lambda_i(P_j)$ is a scalar multiplication of matrix. By solving the above linear equations of matrices, we can get $h=\sum_{i=1}^K X_i\Lambda_i$. Thus the local expansion of $h$ can be obtained from the local expansions of $\Lambda_1,\Lambda_2,\dots,\Lambda_K$.  
\begin{enumerate}
    \item[(1)]
   In our AG-based Polynomial codes, the above divisor is \( G = 2D + (mn - 1)P \) in Construction \hyperref[con:1]{1} and \( G = D + (mn - 1)P \) in Construction \hyperref[con:2]{2}, and \( h \in \mathcal{L}(G)^{\frac{t}{m} \times \frac{s}{n}} \). Since \(\{ f_i g_j \in \mathcal{L}(G) \mid i = 1, \dots, m, j = 1, \ldots, n \}\) are linearly independent over \(\mathbb{F}_q\), we can expand this set to form the basis \(\Lambda_1, \dots, \Lambda_K\) of \(\mathcal{L}(G)\) such that \( \Lambda_{(i-1) + (j-1)m} = f_i g_j \) for \( i = 1, \ldots, m \) and \( j = 1, \ldots, n \). Therefore,
   \[
 A_i B_j =X_{(i-1) + (j-1)m},\ \text{for} \ 1\leq i \leq m,\ 1\leq j \leq n. \tag{15}\label{eq:15}
   \]

\item[(2)] In our AG-based MatDot codes, the above divisor is \( G = 2D + 2(p-1)P \) and \( h \in \mathcal{L}(G)^{t \times s} \). 
The product \( AB = \sum_{i=1}^p A_i B_i \) is the coefficient of \( \tau^{-(p-1)} \) in the local expansion of \( h \). Therefore, 
\[
 AB = \sum_{i=1}^K X_i \lambda^{(i)}_{-(p-1)}, \tag{16}\label{eq:16}
\]

where \( \lambda^{(i)}_{-(p-1)} \) is the coefficient of \( \tau^{-(p-1)} \) in the local expansion of \( \Lambda_i \).

\item[(3)] In the AG-based PolyDot codes, the above divisor is \( G = (4g + 2c)Q + ((2p - 1)mn - 1)P \) and \( h \in \mathcal{L}(G)^{\frac{t}{m} \times \frac{s}{n}} \), where
\[
c = \left\{
\begin{array}{ll}
m(n-1) & \text{if } m = 1 \; \text{or} \; m \ge n \ge 2, \\
n(m-1) & \text{if } n = 1 \; \text{or} \; n > m \ge 2.
\end{array}
\right.
\]
The product \( AB = (C_{i,w})_{1\leq i\leq m, 1\leq w\leq n} \), where \( C_{i,w} = \sum_{j=1}^{p} A_{i,j} B_{j,w} \).

\begin{itemize}
    \item If \( m = 1 \) or \( m \ge n \ge 2 \), \( C_{i,w} \) is the coefficient of \( \tau^{-((i-1) + (p-1)mn + m(w-1))} \) in the local expansion of \( h \). Therefore, 
   \[
    C_{i,w} = \sum_{i=1}^K X_i \lambda^{(i)}_{-((i-1) + (p-1)mn + m(w-1))},\tag{17}\label{eq:17}
   \]

    \item If \( n = 1 \) or \( n > m \ge 2 \), \( C_{i,w} \) is the coefficient of \( \tau^{-((i-1)n + (p-1)mn + (w-1))} \) in the local expansion of \( h \). Therefore, 
\[
C_{i,w} = \sum_{i=1}^K X_i \lambda^{(i)}_{-((i-1)n + (p-1)mn + (w-1))},
\tag{18}\label{eq:18}
\]
\end{itemize}
where the above $\lambda^{(i)}_{-*}$ is the coefficient of $\tau^{-*}$ in the local expansion of $\Lambda_i$ at $P$.
\end{enumerate}

\begin{Rmk}[Decoding complexity]
In Step (i), we decode \( h \) from its \( R \) evaluations by solving the linear equations (\hyperref[eq:14]{14}). The coefficient matrix \( G = (\Lambda_i(P_j))_{1 \leq i \leq K, 1 \leq j \leq R} \) is a full row rank matrix, hence it has a right inverse \( G^{-1} \). Thus, we have
\[
(X_1, X_2, \ldots, X_K) = (h(P_1), h(P_2), \ldots, h(P_R)) G^{-1},
\]
where the product of $h(P_i)$ and the $(i,j)$-th entry of $G^{-1}$ is the scalar multiplication. The complexity of computing \( G^{-1} \) is \( O(K^2 R) \) using Gaussian elimination. 
\begin{enumerate}
    \item[(1)] For AG-based Polynomial codes, we only need the first \( mn \) values \( X_1, \ldots, X_{mn} \), which will cost $O(mn\frac{t}{m}\frac{s}{n}R)=O(tsR)$ operations in $\F_q$. By (\hyperref[eq:15]{15}), the total complexity to recover $AB$ is \( O(K^2 R+ts R)\).
    \item[(2)] For AG-based MatDot codes, we need all the values of \( X_1, \ldots, X_K \), which will cost \( O(K^2 R + ts K R) \) operations. By (\hyperref[eq:16]{16}), the total complexity to recover $AB$ is \( O(K^2 R+ts KR+tsK)=O(tsKR + K^2 R)\).
    
    \item[(3)] For AG-based PolyDot codes, we also need all the values of \( X_1, \ldots, X_K \). By (\hyperref[eq:17]{17}) and (\hyperref[eq:18]{18}), the total complexity to recover $AB$ is \( O\left(\frac{ts}{mn} K R + K^2 R\right) \).
\end{enumerate}

\end{Rmk}

\section{Comparisons} \label{sec:7}
In this section, we compare the recovery thresholds of our AG-based Polynomial and MatDot codes with those in \cite{fidalgo2024distributed}.
For the function field \(F/\mathbb{F}_q\), recall that \(D \in \text{Div}(F)\) is a positive non-special divisor with \(\deg(D) = g\) and \(\ell(D) = 1\). Let \(P \in \mathbb{P}_F\) be a rational point such that \(P \not\in \text{supp}(D)\). Denote \(W(P)\) as the Weierstrass semigroup of \(P\). Define its conductor \(c(P)\) as
\[
c(P) := \min \left\lbrace k \in W(P) : [k, \infty) \subseteq W(P) \right\rbrace.
\]
By the Weierstrass gap theorem introduced in Section \hyperref[sec:2.1]{2.1}, we have \(g + 1 \le c(P) \le 2g\) for \(g > 0\).

Observe that our AG-based Polynomial and MatDot codes consider functions in the Riemann-Roch space \(\mathcal{L}(D + \infty P) = \bigcup_{i=0}^{\infty} \mathcal{L}(D + iP)\), whereas the constructions in \cite{fidalgo2024distributed} consider functions in the one-point Riemann-Roch space \(\mathcal{L}(\infty P) = \bigcup_{i=0}^{\infty} \mathcal{L}(iP)\).

\subsection{Comparison with AG-Based Polynomial Codes}

For AG-based polynomial codes, \cite{fidalgo2024distributed} provides three constructions with the following recovery thresholds:
\begin{table}[H]
	\centering
	\begin{tabular}{|c|c|c|}
		\hline
		& $m \not\in W(P)$ & $m \in W(P)$ \\
		\hline
		Construction $A$ in \cite{fidalgo2024distributed} & $2c(P) + mn$ & $2c(P) + mn$ \\
		\hline
		Construction $B$ in \cite{fidalgo2024distributed} & $c(P) + m'n$ & $c(P) + mn$ \\
		\hline
		Construction $C$ in \cite{fidalgo2024distributed} & $c(P) + m_n+m$ & $c(P) + mn$ \\
		\hline
	\end{tabular}
	\caption{Recovery thresholds of AG-based polynomial codes in \cite{fidalgo2024distributed}}
	\label{tab:4}
\end{table}

\noindent where \( m' := \min \left\lbrace k \in W(P) : k \ge m \right\rbrace \), \( m_1 := 0 \), and
\( m_i := \min \{ k \in W(P) : k \ge m_{i-1} + m \} \) for \( i = 2, \ldots, n \).

When \( m \in W(P) \), our Construction \hyperref[con:2]{2} has a recovery threshold \( R = g + mn \), which is always better than \( c(P) + mn \) since \( g + 1 \le c(P) \le 2g \). Hence, we consider the case when \( m \notin W(P) \). Comparing our Construction \hyperref[con:1]{1} with Construction A in \cite{fidalgo2024distributed}, our recovery threshold \( R = 2g + mn \) is better than \( 2c(P) + mn \) since \( g + 1 \le c(P) \le 2g \).

To compare our Construction \hyperref[con:2]{2} with Constructions B and C in \cite{fidalgo2024distributed}, we need to make some modifications:
Since \( m' \in W(P) \), then \( 2m', \ldots, (n-1)m' \) are also in \( W(P) \). Thus, we can modify Construction \hyperref[con:2]{2} to obtain the following construction:

\begin{Con} \label{con:5} [The First Variation of AG-based Polynomial Code]

\noindent Set $f_i$ the same as in Construction \hyperref[con:1]{1} for $i = 1, 2, \ldots, m$ and set $g_1 = 1 \in \mathbb{F}_q$ and $g_j \in \mathcal{L}((j-1)m'P) \backslash \mathcal{L}(((j-1)m'-1)P)$ for $j = 2, \ldots, n$, i.e.,
	\[
	g_j = \tau^{-(j-1)m'} + \sum_{\ell=-(j-1)m' + 1}^{\infty} \hat{\lambda}_\ell^{(j)} \tau^\ell \quad \text{for} \quad j = 2, \ldots, n,
	\]
	
\end{Con}

Similarly, since $m_1, \ldots, m_n \in W(P)$, we can obtain the following construction:
\begin{Con} \label{con:6} [The Second Variation of AG-based Polynomial Code]
	
 \noindent Set $f_i$ the same as in Construction \hyperref[con:1]{1} for $i = 1, 2, \ldots, m$ and set $g_1 = 1 \in \mathbb{F}_q$ and $g_j \in \mathcal{L}(m_jP) \backslash \mathcal{L}((m_j-1)P)$ for $j = 2, \ldots, n$, i.e.,
	\[
	g_j = \tau^{-m_j} + \sum_{\ell=-m_j + 1}^{\infty} \hat{\lambda}_\ell^{(j)} \tau^\ell \quad \text{for} \quad j = 2, \ldots, n.
	\]

\end{Con}

For Constructions \hyperref[con:5]{5} and \hyperref[con:6]{6}, we have the following theorem, the proof of which is similar to the proof of Theorem \hyperref[thm:3.2]{3.2}:

\begin{Thm}\label{thm:7.1}
For any prime power \( q \) and partitioning parameters \( m, n \ge 1 \) that divide \( t \) and \( s \) respectively, Construction \hyperref[con:5]{5} provides an AG-based DMM scheme in \( \mathbb{F}_q \) with a recovery threshold of \( R = g + m'n \), and Construction \hyperref[con:6]{6} provides an AG-based DMM scheme in \( \mathbb{F}_q \) with a recovery threshold of \( R = g + m_n + m \), where \( m' = \min \left\lbrace k \in W(P) : k \ge m \right\rbrace \) and \( m_1 = 0 \), \( m_i = \min \left\lbrace k \in W(P) : k \ge m_{i-1} + m \right\rbrace \) for \( i = 2, \ldots, n \).

\end{Thm}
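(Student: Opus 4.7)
The plan is to mirror the proof of Theorem \hyperref[thm:3.2]{3.2} for each of the two variations. For each construction I must (a) verify Condition \hyperref[cond:2]{2}, namely that the products $f_i g_j$ have pairwise distinct valuations at $P$, and (b) identify a divisor whose Riemann--Roch space contains every entry of $h = fg$, whence the recovery threshold equals one more than the degree of that divisor.

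For Construction \hyperref[con:5]{5}, since $f_i$ is chosen as in Construction \hyperref[con:1]{1}, its local expansion gives $\nu_P(f_i) = -(i-1)$, and by construction $\nu_P(g_j) = -(j-1)m'$. Hence $\nu_P(f_i g_j) = -((i-1) + (j-1)m')$. The key combinatorial step is the distinctness of these exponents: if $(i-1)+(j-1)m' = (k-1)+(w-1)m'$, then $i-k = (w-j)m'$; since $|i-k| < m \le m'$, this forces $w=j$ and hence $i=k$. Next, because $f_i \in \mathcal{L}(D + (m-1)P)$ and $g_j \in \mathcal{L}((j-1)m'P)$, every product lies in $\mathcal{L}(D + ((m-1)+(n-1)m')P) \subseteq \mathcal{L}(D + (m'n-1)P)$, where the last inclusion uses $m \le m'$. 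Each entry of $h$ therefore lies in $\mathcal{L}(D + (m'n-1)P)$, and the recovery threshold is $\deg(D + (m'n-1)P)+1 = g + m'n$.

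For Construction \hyperref[con:6]{6}, the same computation yields $\nu_P(f_i g_j) = -((i-1) + m_j)$. Distinctness is proved by a small induction on the defining inequality $m_j \ge m_{j-1} + m$: telescoping gives $m_j - m_w \ge (j-w)m$ whenever $j > w$, so a coincidence $(i-1)+m_j = (k-1)+m_w$ would force $|k-i| \ge m$, contradicting $0 \le i,k \le m-1$. Since $f_i \in \mathcal{L}(D + (m-1)P)$ and $g_j \in \mathcal{L}(m_j P)$, every product lies in $\mathcal{L}(D + (m-1+m_n)P)$, and the recovery threshold is $\deg(D + (m-1+m_n)P) + 1 = g + m + m_n$.

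The main obstacle is purely bookkeeping: the entire argument hinges on the fact that the pole-order increments used for $g_j$ (namely $m'$ in Construction \hyperref[con:5]{5} and $m_j - m_{j-1}$ in Construction \hyperref[con:6]{6}) are all at least $m$, which is exactly what prevents the $f_i$ contributions from causing a collision of valuations. Once distinctness of $\nu_P(f_i g_j)$ is established, the remainder of the proof is a direct appeal to Lemma \hyperref[lem:2.1]{2.1} and the general framework in Section \hyperref[sec:2.4]{2.4}, exactly as in Theorem \hyperref[thm:3.2]{3.2}.
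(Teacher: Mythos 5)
Your proof is correct and follows exactly the same strategy the paper intends by the remark that the argument is ``similar to the proof of Theorem~\hyperref[thm:3.2]{3.2}'': verify Condition~\hyperref[cond:2]{2} via the valuation arithmetic, then locate all products $f_i g_j$ in a Riemann--Roch space $\mathcal{L}(D+vP)$ and read off $R=\deg(D+vP)+1$. (Two cosmetic quibbles that do not affect correctness: the index range should be $1\le i,k\le m$ so the bound is $|i-k|\le m-1<m$, and the final recovery-threshold count comes from the evaluation-map discussion in Section~\hyperref[sec:2.2]{2.2} rather than Lemma~\hyperref[lem:2.1]{2.1}, which only computes dimensions.)
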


Hence, our Constructions \hyperref[con:5]{5} and \hyperref[con:6]{6} have better recovery thresholds than Constructions $B$ and $C$ in \cite{fidalgo2024distributed}.

\subsection{Comparison with AG-Based MatDot Codes}

For AG Matdot codes, \cite{fidalgo2024distributed} provides a construction with a recovery threshold of \( 2c(P) + 2p - 1 \). Therefore, it is clear that our Construction \hyperref[con:3]{3} offers a better recovery threshold of \( R = 2g + 2p - 1 \). Moreover, \cite{fidalgo2024distributed} also provides an optimal recovery threshold for \( p \ge 2c(P) \) when using functions in the one-point Riemann-Roch space, which is hard to determine because it depends on the structure of \( W(P) \). However, we can compare the recovery threshold of our Construction \hyperref[con:3]{3} with the optimal recovery threshold for some specific curves.

\begin{Exm}
For Hermitian curves, the recovery threshold in \cite{fidalgo2024distributed} is approximately $2p - 1 + 3g$ according to \cite[Proposition 4]{fidalgo2024distributed}, while our recovery threshold is $2p - 1 + 2g$, which is better.

\end{Exm}

\begin{Exm}
For elliptic curves, the above recovery threshold in \cite{fidalgo2024distributed} is $2p - 1 + 2g + 2$ according to \cite[Proposition 3]{fidalgo2024distributed}, while our recovery threshold is $2p - 1 + 2g$, which is better.

\end{Exm}

\section{Conclusion}
In this paper, we apply algebraic geometry codes to distributed matrix multiplication. By utilizing local expansions in function fields, we extend Polynomial codes \cite{yu2017polynomial}, Matdot codes \cite{dutta2019optimal}, and PolyDot codes \cite{dutta2019optimal}, as well as Entangled Polynomial codes \cite{yu2020straggler} and Generalized PolyDot codes \cite{dutta2018unified}, to the AG-based case. This extension overcomes the limitations on the size of the finite field \( q \) imposed by previous RS-based DMM schemes. Moreover, compared to the previous AG-based DMM schemes in \cite{fidalgo2024distributed}, our schemes demonstrate superior recovery thresholds.

\section*{Acknowledgment}
The work of Jiang Li was partially supported by the National Key Research and Development Program under Grant 2022YFA1004900.
The work of Songsong Li was supported in part by the National Natural Science Foundation of China under Grant 12101404. The work of Chaoping Xing was supported by the National Natural Science Foundation of China under Grant 12031011 and 123611418.

\bibliographystyle{plain}
\bibliography{ref}

\begin{thebibliography}{10}

\bibitem{MSJ20}
Malihe Aliasgari, Osvaldo Simeone, and Jörg Kliewer.
\newblock Private and secure distributed matrix multiplication with flexible
  communication load.
\newblock {\em IEEE Transactions on Information Forensics and Security},
  15:2722--2734, 2020.

\bibitem{CT18}
Wei-Ting Chang and Ravi Tandon.
\newblock On the capacity of secure distributed matrix multiplication.
\newblock In {\em 2018 IEEE Global Communications Conference (GLOBECOM)}, pages
  1--6, 2018.

\bibitem{dutta2018unified}
Sanghamitra Dutta, Ziqian Bai, Haewon Jeong, Tze~Meng Low, and Pulkit Grover.
\newblock A unified coded deep neural network training strategy based on
  generalized polydot codes.
\newblock In {\em 2018 IEEE International Symposium on Information Theory
  (ISIT)}, pages 1585--1589. IEEE, 2018.

\bibitem{dutta2017coded}
Sanghamitra Dutta, Viveck Cadambe, and Pulkit Grover.
\newblock Coded convolution for parallel and distributed computing within a
  deadline.
\newblock In {\em 2017 IEEE International Symposium on Information Theory
  (ISIT)}, pages 2403--2407. IEEE, 2017.

\bibitem{dutta2019optimal}
Sanghamitra Dutta, Mohammad Fahim, Farzin Haddadpour, Haewon Jeong, Viveck
  Cadambe, and Pulkit Grover.
\newblock On the optimal recovery threshold of coded matrix multiplication.
\newblock {\em IEEE Transactions on Information Theory}, 66(1):278--301, 2019.

\bibitem{DEK20}
Rafael G.~L. D’Oliveira, Salim El~Rouayheb, and David Karpuk.
\newblock Gasp codes for secure distributed matrix multiplication.
\newblock {\em IEEE Transactions on Information Theory}, 66(7):4038--4050,
  2020.

\bibitem{fidalgo2024distributed}
Adri{\'a}n Fidalgo-D{\'\i}az and Umberto Mart{\'\i}nez-Pe{\~n}as.
\newblock Distributed matrix multiplication with straggler tolerance using
  algebraic function fields.
\newblock {\em arXiv preprint arXiv:2401.13573}, 2024.

\bibitem{lee2017speeding}
Kangwook Lee, Maximilian Lam, Ramtin Pedarsani, Dimitris Papailiopoulos, and
  Kannan Ramchandran.
\newblock Speeding up distributed machine learning using codes.
\newblock {\em IEEE Transactions on Information Theory}, 64(3):1514--1529,
  2017.

\bibitem{lee2017high}
Kangwook Lee, Changho Suh, and Kannan Ramchandran.
\newblock High-dimensional coded matrix multiplication.
\newblock In {\em 2017 IEEE International Symposium on Information Theory
  (ISIT)}, pages 2418--2422. IEEE, 2017.

\bibitem{machado2023hera}
Roberto~A Machado, Gretchen~L Matthews, and Welington Santos.
\newblock Hera scheme: Secure distributed matrix multiplication via hermitian
  codes.
\newblock In {\em 2023 IEEE International Symposium on Information Theory
  (ISIT)}, pages 1729--1734. IEEE, 2023.

\bibitem{OC24}
Okko Makkonen and Camilla Hollanti.
\newblock General framework for linear secure distributed matrix multiplication
  with byzantine servers.
\newblock {\em IEEE Transactions on Information Theory}, 70(6):3864--3877,
  2024.

\bibitem{makkonen2023algebraic}
Okko Makkonen, Elif Sa{\c{c}}{\i}kara, and Camilla Hollanti.
\newblock Algebraic geometry codes for secure distributed matrix
  multiplication.
\newblock {\em arXiv preprint arXiv:2303.15429}, 2023.

\bibitem{MLG22}
Nitish Mital, Cong Ling, and Deniz Gündüz.
\newblock Secure distributed matrix computation with discrete fourier
  transform.
\newblock {\em IEEE Transactions on Information Theory}, 68(7):4666--4680,
  2022.

\bibitem{niederreiter1999new}
Harald Niederreiter, Chaoping Xing, and Kwok~Yan Lam.
\newblock A new construction of algebraic geometry codes.
\newblock {\em Applicable Algebra in Engineering, Communication and Computing},
  9:373--381, 1999.

\bibitem{stichtenoth2009algebraic}
Henning Stichtenoth.
\newblock {\em Algebraic function fields and codes}, volume 254.
\newblock Springer Science \& Business Media, 2009.

\bibitem{yu2017polynomial}
Qian Yu, Mohammad Maddah-Ali, and Salman Avestimehr.
\newblock Polynomial codes: an optimal design for high-dimensional coded matrix
  multiplication.
\newblock {\em Advances in Neural Information Processing Systems}, 30, 2017.

\bibitem{yu2020straggler}
Qian Yu, Mohammad~Ali Maddah-Ali, and A~Salman Avestimehr.
\newblock Straggler mitigation in distributed matrix multiplication:
  Fundamental limits and optimal coding.
\newblock {\em IEEE Transactions on Information Theory}, 66(3):1920--1933,
  2020.

\end{thebibliography}

\end{document}